\documentclass[11pt,a4paper]{article}

\usepackage[margin=1in]{geometry}
\usepackage{amsmath}
\usepackage{amssymb}
\usepackage{amsthm}
\usepackage{array}
\usepackage{booktabs}
\usepackage{placeins}
\usepackage{needspace}
\usepackage{listings}
\usepackage{pgfplots}
\pgfplotsset{compat=1.17}
\usetikzlibrary{decorations.pathreplacing}
\lstdefinestyle{walk}{basicstyle=\small\ttfamily, mathescape=true, columns=fullflexible,
  keepspaces=true, frame=single, framesep=6pt, xleftmargin=2pt, xrightmargin=2pt}
\usepackage[numbers]{natbib}
\usepackage[hidelinks,unicode]{hyperref}

\newcommand{\doi}[1]{\href{https://doi.org/#1}{\nolinkurl{doi:#1}}}

\hypersetup{
  pdftitle={Certified Panic Mode: Repair-Invariant Error Recovery for Maximal-Munch Lexing},
  pdfauthor={Nicklas Nidhögg},
  pdfsubject={Lexical error recovery; deterministic finite automata; maximal munch tokenization},
  pdfkeywords={error recovery, panic mode, tokenization, deterministic finite automata, maximal munch}
}

\theoremstyle{definition}
\newtheorem{definition}{Definition}

\theoremstyle{plain}
\newtheorem{lemma}{Lemma}
\newtheorem{theorem}{Theorem}
\newtheorem{corollary}{Corollary}
\newtheorem{proposition}{Proposition}
\newtheorem{assumption}{Assumption}

\theoremstyle{remark}
\newtheorem{remark}{Remark}

\newcommand{\code}[1]{\texttt{#1}}
\newcommand{\Rreach}{R_{\mathrm{reach}}}
\newcommand{\Rcomp}{R_{\mathrm{complete}}}
\newcommand{\suffix}[2]{#1[#2..]}
\newcommand{\prefix}[2]{#1[0..#2)}
\newcommand{\window}[3]{#1[#2..#3)}
\newcommand{\con}{\operatorname{con}}


\title{Certified Panic Mode:\\Repair-Invariant Error Recovery for Maximal-Munch Lexing}
\author{Nicklas Nidh\"ogg\\[0.4em] \normalsize Independent Researcher\\ \normalsize
\texttt{nicklas.nidhogg@gmail.com}\\ \normalsize ORCID: 0009-0006-6161-6150}
\date{September 6, 2026}

\begin{document}

\maketitle

\begin{abstract}
Classical panic-mode recovery skips a failed scan forward by convention: it promises progress, and
membership in a designated set where one is found, but no repair-universal boundary guarantee for the
position it lands on. This paper chooses the resume position by theorem: a position is a sound recovery point when
every prefix repair whose scan commits through the certificate's returned evidence places a token boundary
there, complete repairs the special case, by a committed-prefix lemma over certificates posted for a
different purpose (arXiv:2608.03473, arXiv:2608.09761). The quantifier is strictly the stronger one, and a
dichotomy locates the difference exactly: the inclusion is strict precisely when no repair completes
while some scan still commits through the evidence, a two-token witness realizing the case. The evidence-returning form
pairs each answer with the interval it rests on, which is what makes the guarantee deployable: a caller
who knows what the scanner cannot check decides whether the certified boundary transfers to the clean
input it intended. The search is one forward walk in evidence order and provably terminates; the
guarantee is per-automaton, relative to the active mode. The procedure ships in the munch lexing
library, and a deterministic corruption study measures its recovery quality beside the classical skip-one
and delimiter conventions. Every certified answer whose evidence survives the damage passes its executable
landing check on every recovery move, $40{,}885$ of $40{,}885$; among first answers, the $5{,}595$ resting
on evidence the damage reached land $1{,}589$ times, reported and never asserted. On $13{,}522$ of the
$16{,}808$ trials the shipped repair routine labels beyond repair, the resumed suffix tokenizes whole and
the two quantifiers provably coincide, so both are empty there if those negative labels are exact;
$3{,}286$ stay undetermined.
\end{abstract}

\section{Introduction}

Two companion papers certify positions in completely tokenizable input: a certified split byte begins a token at
every occurrence in every completely tokenizable input, and a certified split window places a token origin at a
fixed offset into every occurrence~\citep{splitpoints2026, splitwindows2026}. Error recovery lives exactly where
those hypotheses fail, in input that is not completely tokenizable. The classical answer is panic mode, at two
levels: the lexical form deletes characters until a well-formed token appears, and the parser form discards to a
designated synchronizing delimiter, a newline or a semicolon. Both are old and widespread; both promise progress
and, where a synchronizer is found, membership in the designated set, and neither supplies a repair-universal
boundary guarantee for the position landed on; and this study transplants the delimiter convention down a level
to evaluate it, in both placements, beside the lexical form. This paper supplies the theorem the convention
never had: the same static certificates that plan parallel chunk boundaries in valid input certify
resynchronization points in broken input, with a guarantee quantified over every prefix repair whose scan
commits through the certificate's preserved evidence.

Three naive statements fail, and their failures shape the definition. First, demanding that the resumed suffix
tokenize on its own is insufficient rather than wrong: over $\{\code{ab}, \code{b}\}$ the suffix \code{b}
tokenizes alone, yet the repair \code{a} produces the whole \code{ab}, one token with no boundary at the cut, so
standalone suffix tokenizability filters positions without aligning any of them with a repaired whole. Second, a
certified window with positive origin spans bytes before the position it certifies, so the resumed suffix alone
does not contain the occurrence the certificate speaks about; a statement quantifying only over the suffix
cannot use the window theorem. Third, the prefix guarantee of the split-points paper speaks about the tokens
committed before a failure, never about how to continue past it. The definition that survives all three is
repair invariance: a position is a sound recovery point when every repair of the broken prefix whose scan
commits through the certificate's evidence places a committed token boundary there, complete tokenizability the
corollary case. It quantifies over repaired inputs rather than bare suffixes, which defeats the first failure;
the theorems' anchor conditions pin the certificate's evidence inside the unmodified region, which defeats the
second; and it is a statement about continuation past the failure, not about the tokens committed before it,
which is what the third failure asks for.

The results are one-line consequences of the posted certificates, and that is the point: nothing new needs to be
proved about the automaton, only the right question asked of theorems that already exist. The contribution is
the recovery contract and its operational validation, not a third certificate construction. Certified bytes are
repair-invariant at their own position; certified windows at their occurrence plus the certified origin, under
the one extra constraint that the repair leave the occurrence intact, which the searching algorithm satisfies by
construction because it only reports evidence lying wholly at or after its own search anchor. Whether the text
past that anchor is truly unmodified is the caller's trust decision, and the evaluation prices it. A progress
lemma makes the scan-and-recover loop terminate, and mode-driven scanning gets a per-automaton certificate,
relative to the active mode, never a theorem over repairs that change modes. The procedure ships in the munch
lexing library~\citep{munch}, and the evaluation measures what the theorem is worth against the conventions it
replaces: quality, not throughput.

The contributions:
\begin{itemize}
\item Repair-invariant resynchronization, a definition of recovery-point soundness that quantifies over
every prefix repair whose scan commits through the certificate's evidence, complete repairs the special case, with
the three naive alternatives shown to fail
(Section~\ref{sec:repair}).
\item Two theorems, one per certificate kind, each a short consequence of the companions' posted
results, plus the progress lemma and the mode-scope remark that state a driver's obligations
(Section~\ref{sec:repair}).
\item The searching primitive and its driver as shipped, with the one deliberate divergence from the
planners' certificate policy explained, and the fail-closed fixtures that pin each layer
(Sections~\ref{sec:search} and~\ref{sec:impl}).
\item A deterministic corruption study whose oracle is exact for the theorem-predicted boundary
consequence on the exhibited pristine repair: the pristine corpus itself supplies the repair the
theorems quantify over, so the landing check runs on every returned answer whose supporting evidence the
damage spares,
rather than on a sample; the library's anchored complete-repair decider is evaluated beside the walk under
its documented contract, blind and oracle-anchored, a different-property comparator with repairability
stratifying every trial; classical skip-one and skip-to-delimiter
conventions are measured beside it (Section~\ref{sec:eval}).
\end{itemize}

The logical chain, in one paragraph. The paper separates three questions a damaged scan raises: whether
any repair completes the input, whether a returned position is a boundary under every repair whose scan
commits through its evidence, and how often, under measured damage, that evidence survives. Certificates from the
companion papers identify positions; the resynchronization theorems transfer their guarantee to every
evidence-reaching repair of a failed scan's prefix; the collapse proposition and its dichotomy corollary
locate exactly where the strengthened quantifier adds content; the overhang lemma tells a caller when
coverage is forced by arithmetic alone; and the campaign measures availability, coverage, and displacement
where the theorems' premises hold and where they fail.

\section{Preliminaries}

The model is the companions', restated and not reproved. A token set compiles to a deterministic finite
automaton scanned by the usual longest-match loop: from an offset the scanner consumes bytes while transitions
are defined, remembers the last accepting position, emits the token found there, and restarts at that position
in the initial state. Emitted tokens have positive length: the committed position strictly advances, an
accepting initial state (a nullable pattern's minimization) never emits at zero width, and a scan that cannot
advance fails. Write $\mathrm{con}(x)$ for the scanner's final committed offset on input $x$; call $x$
completely tokenizable when $\mathrm{con}(x) = |x|$; and call the \emph{failure offset} of a failed scan $f =
\mathrm{con}(x)$, the final committed offset: the start of the token attempt that failed, not the read head
where the transition died, which lies at or past $f$ (equal exactly when the very first byte of the attempt has
no transition). For a scan resumed at offset $s$, the failure offset is the absolute $f_s = s +
\mathrm{con}(x[s..])$, the driver's current committed offset. The shipped driver always searches from one past
the failure offset; naming another anchor is the evaluation's oracle-aided arm and the shipped clean-anchored
interface, calling the primitive directly, never the failure-anchored driver. A byte is \emph{formally
certified} when the split-points paper's static condition holds of it, and that paper's characterization theorem
makes the condition exact: a byte is certified if and only if every occurrence of it, in every completely
tokenizable input, begins a token; the characterization carries that paper's standing assumption that the
initial state is itself live. The condition is vacuously true of a byte no completely tokenizable input contains
at all, and the library reports only the \emph{useful} certified bytes, those with a live initial-state
transition, the companion's own reported class. Throughout this paper \emph{certified byte} means the reported,
useful kind: the walk, the worked example, the refusal claims, and the evaluation all speak of what the search
reports, and the vacuous case is named where it matters. A \emph{certified window} $(W, o)$ is a nonempty byte
string with a certified origin $o$, $0 \le o < |W|$: in every completely tokenizable input containing an
occurrence of $W$, the token covering the occurrence's final byte begins exactly $o$ bytes into the occurrence.
The certificate is defined for every window length; the shipped recovery walk searches lengths two to four, a
cap whose consequences Section~\ref{sec:search} owns. The window results inherit the windows paper's scope, flat
token sets with non-nullable patterns; the byte results carry no such restriction. Everything in this paper is
relative to one fixed automaton; mode-driven scanning returns in Section~\ref{sec:repair}. Both certificates are
static, computed from the automaton alone before any input exists, which is what lets a recovery built on them
promise anything about inputs it has never scanned.

\section{Repair-invariant resynchronization}
\label{sec:repair}

Let $\Sigma$ be the byte alphabet throughout.
\begin{definition}[Local repair]
For input $x$ and anchor $c$ with $0 \le c \le |x|$, a \emph{repair before $c$} is any string $r \in
\Sigma^{*}$, giving the repaired input $y = r \cdot \suffix{x}{c}$. A repair is \emph{tokenizable} when $y$ is
completely tokenizable.
\end{definition}

Nothing at or after the anchor changes; everything before it may, by outright replacement rather than only
insertion or deletion, so the results below quantify over every prefix-local fix at once.

Complete tokenizability to the end of input is the natural quantification for a whole file in hand, but it is
stronger than the certificates need, and the stronger form buys vacuity: a tail whose far end no repair can save
would escape every claim below even where the near end is perfectly ordinary. The scan itself supplies the
weaker, sharper currency. For any input $w$, write $\con(w)$ for the scan's \emph{committed length}: the sum of
the emitted token lengths when maximal munch halts, equal to $|w|$ exactly when $w$ is completely tokenizable.

\begin{lemma}[Committed prefix]
\label{lem:committed}
For any input $w$, the committed prefix $z = \prefix{w}{\con(w)}$ is completely tokenizable, and its scan emits
exactly the tokens the scan of $w$ commits.
\end{lemma}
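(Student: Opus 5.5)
The approach is induction on the emitted tokens of the scan of $w$, comparing that scan step by step with the scan of the committed prefix $z = \prefix{w}{\con(w)}$. Let the scan of $w$ emit $t_1,\dots,t_k$ at offsets $0 = p_0 < p_1 < \dots < p_k = \con(w)$, with $t_i = \window{w}{p_{i-1}}{p_i}$ and positive lengths as the preliminaries require. I will show that for each $i \le k$ the scan of $z$, started at $p_{i-1}$ in the initial state, emits exactly $t_i$ and restarts at $p_i$. Once that holds, the scan of $z$ reaches $p_k = |z|$, so $\con(z) = |z|$: $z$ is completely tokenizable and emits exactly $t_1,\dots,t_k$.

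The key step, and the main obstacle, is that maximal munch is not monotone under truncation in general. Cutting the input can remove the longer match a scan relied on, and it can also make an earlier accepting position the last one seen. Here the cut sits exactly at a committed boundary, and I will use that. Fix $i$ and run the automaton from the initial state on $\suffix{z}{p_{i-1}}$, which is a prefix of $\suffix{w}{p_{i-1}}$. On $w$ the longest accepting position reached from $p_{i-1}$ is $p_i$, and this is where determinism enters. The transition run on $z$ follows the run on $w$ byte for byte until $z$ ends at $p_k \ge p_i$. So the run on $z$ passes through offset $p_i$ in the same accepting state. It can only die earlier than on $w$, and it can see no accepting position beyond $p_i$ that $w$ did not also see. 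Any accepting position after $p_i$ that is reachable in $z$ lies at or before $p_k$ and would equally be reached in $w$. That contradicts $p_i$ being the last accepting position of the $w$-scan. Hence the last accepting position on $z$ is exactly $p_i$, the emitted token is $t_i$, and the scan restarts at $p_i$ in the initial state, as on $w$.

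Two boundary cases remain. First, when $k = 0$, $z$ is empty and trivially completely tokenizable with no tokens. The zero-width convention for a nullable initial state plays no role, since emission requires positive length. Second, the scan of $z$ must halt at $p_k$ rather than attempt further. It stops because the offset equals $|z|$, the end of input, so it commits exactly the $k$ tokens. I expect the write-up to spend most of its care on the determinism and last-accepting-position argument in the previous paragraph. Everything else is bookkeeping on offsets.
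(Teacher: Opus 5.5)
Your proposal is correct and follows the paper's proof: both argue by induction over the committed tokens, observing that the run on $z$ from each committed start is a prefix of the run on $w$. That run therefore still reaches $p_i$ in the same accepting state, and after it passes only the non-accepting states the $w$-scan rolled back over, so truncating at the final commit neither removes nor adds an accepting candidate and the last commit lands exactly at $|z|$.
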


\begin{proof}
By induction over the committed tokens. Each commit is the last accepting position on the viable run from its
token's start, and the scan of $w$ read past it only through nonaccepting states before rolling back, so
truncating $w$ at the final commit removes no accepting candidate: at every committed start, the scan of $z$
sees the same last accept and commits the same token, and the final commit lands exactly at $z$'s end.
\end{proof}

The lemma is the bridge the certificates cross: whatever a scan commits is itself a completely tokenizable
input with the identical segmentation, so any theorem quantifying over completely tokenizable inputs applies to
the committed prefix of every scan, finished or failed.

\begin{definition}[Evidence-reaching repair]
\label{def:reaching}
For evidence occupying $\window{x}{q}{q+w}$ with $c \le q$, $0 < w$, and $q + w \le |x|$, a repair $r$
before $c$ is \emph{evidence-reaching} when the scan of $y = r \cdot \suffix{x}{c}$ commits through the
evidence's image: $\con(y) \ge |r| + (q + w - c)$.
\end{definition}

Every tokenizable repair is evidence-reaching for every choice of evidence, since $\con(y) = |y|$; the reverse
fails, and that gap is the point. A repair whose scan dies far past the evidence still counts, so the
quantification below binds real, failing inputs, not only the ideal ones that tokenize to the end.

\begin{definition}[Repair-invariant resynchronization point]
Position $p$ with $c \le p < |x|$ is a \emph{repair-invariant resynchronization point} for $x$ anchored at $c$,
relative to evidence occupying $\window{x}{q}{q+w}$, if in every evidence-reaching repair $y = r \cdot
\suffix{x}{c}$, a token of the scan's committed segmentation begins at $|r| + (p - c)$.
\end{definition}

\begin{theorem}[Certified bytes resynchronize]
If $x[p]$ is a certified split byte and $c \le p$, then $p$ is a repair-invariant resynchronization point for
$x$ anchored at $c$, relative to the evidence $\window{x}{p}{p+1}$: in every repair whose scan commits through
the byte's image, a committed token begins at $|r| + (p - c)$.
\end{theorem}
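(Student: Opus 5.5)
The plan is to reduce the claim to the split-points paper's characterization theorem via Lemma~\ref{lem:committed}. Fix an evidence-reaching repair $r$ before $c$ and set $y = r \cdot \suffix{x}{c}$. Let $p' = |r| + (p - c)$ be the image of $p$ in $y$. Since $\suffix{x}{c}$ is copied verbatim into $y$ and $c \le p < |x|$, we have $y[p'] = x[p]$. So the image position carries the certified byte.

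The first step is to pass to the committed prefix $z = \prefix{y}{\con(y)}$. By Lemma~\ref{lem:committed}, $z$ is completely tokenizable and its scan emits exactly the tokens the scan of $y$ commits. With evidence $\window{x}{p}{p+1}$, so $q = p$ and $w = 1$, the evidence-reaching condition reads $\con(y) \ge |r| + (p + 1 - c) = p' + 1$. Hence $p' < |z|$, and $z[p'] = y[p'] = x[p]$. In other words, $z$ contains an occurrence of the certified byte at offset $p'$.

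The second step applies the characterization theorem. Because $x[p]$ is certified, every occurrence of it in every completely tokenizable input begins a token. Applied to $z$, some token of $z$'s segmentation begins at $p'$. By the lemma's identity of segmentations, that token is a token of the committed segmentation of $y$, and it begins at $|r| + (p - c)$. This is exactly the definition of a repair-invariant resynchronization point. Since $r$ was arbitrary among evidence-reaching repairs, the theorem follows.

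The main obstacle is bookkeeping at the boundary. The quantifier has to reach a position strictly inside $z$, not merely up to it. Reaching only up to it is why the evidence is the one-byte window ending at $p+1$, and why the reaching inequality must be used with $w = 1$ rather than $w = 0$. A secondary point is the standing assumption of the characterization theorem, that the initial state is live. The theorem's hypothesis ``certified'' is meant in the reported, useful sense. For a vacuously certified byte, no completely tokenizable input contains it, so $z$ could not contain $x[p]$ at $p'$. No evidence-reaching repair then exists and the claim holds vacuously. Either way the argument goes through, and I would note this case explicitly rather than rely on it silently.
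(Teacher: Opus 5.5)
Your proof is correct and follows the paper's own argument. You pass to the committed prefix via Lemma~\ref{lem:committed}, use the evidence-reaching inequality with $w = 1$ to get $p' < |z|$ so the certified byte sits inside $z$, and apply the split-points characterization theorem; your explicit index bookkeeping and the remark on the vacuous case spell out steps the paper leaves implicit.
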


\begin{proof}
Every repair anchored at or before $p$ preserves the occurrence of the certified byte at $p$'s image. When the
scan of $y$ commits through that image, the committed prefix $z$ contains it, $z$ is completely tokenizable
with the scan's own segmentation by Lemma~\ref{lem:committed}, and the byte certificate places a token origin
at every occurrence in every completely tokenizable input, the characterization theorem of the split-points
paper~\citep[Theorem~2, arXiv v2]{splitpoints2026}, so a token of $z$'s segmentation, which is the committed
segmentation of $y$, begins at the image.
\end{proof}

\begin{theorem}[Certified windows resynchronize at their origin]
For a flat token set with non-nullable patterns, the windows paper's standing scope: if $x$ carries an
occurrence of a certified window $(W, o)$ at position $q$, then $p = q + o$ is a repair-invariant
resynchronization point for $x$ anchored at any $c \le q$, relative to the evidence $\window{x}{q}{q+|W|}$:
in every repair whose scan commits through the occurrence's image, a committed token begins at $|r| + (p - c)$.
\end{theorem}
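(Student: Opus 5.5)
The plan mirrors the proof of the byte theorem, with the window occurrence in place of the single byte. Fix a repair $r$ before $c$ with $c \le q$, and put $y = r \cdot \suffix{x}{c}$. Since $c \le q$, the whole occurrence $\window{x}{q}{q+|W|}$ lies in the unmodified suffix. It therefore reappears intact in $y$ at offset $q' = |r| + (q - c)$, ending at $q' + |W| = |r| + (q + |W| - c)$.

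Now assume the repair is evidence-reaching, so $\con(y) \ge q' + |W|$ by Definition~\ref{def:reaching}. Take the committed prefix $z = \prefix{y}{\con(y)}$. It contains the full occurrence of $W$ at $q'$. By Lemma~\ref{lem:committed}, $z$ is completely tokenizable, and its maximal-munch segmentation is exactly the committed segmentation of $y$.

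Next I apply the window certificate to $z$, which is allowed because the statement restricts to flat, non-nullable token sets. In every completely tokenizable input containing an occurrence of $W$, the token covering the occurrence's final byte begins exactly $o$ bytes into it. For the occurrence at $q'$ in $z$, the final byte sits at $q' + |W| - 1 < \con(y)$, so it is covered by some committed token. That token begins at $q' + o = |r| + (q + o - c) = |r| + (p - c)$. Hence a committed token of $y$'s scan begins at the image of $p$. Since $0 \le o < |W|$, we also have $c \le q \le p < q + |W| \le |x|$, so $p$ is a legal position.

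The main obstacle is one bookkeeping point: the certificate quantifies over occurrences in completely tokenizable inputs, not in arbitrary ones. The occurrence must therefore lie wholly inside $z$, not merely start inside it. This is exactly where evidence-reaching is needed, and where the condition $c \le q$ is needed to keep the occurrence's leading bytes (those before $p$ when $o > 0$) out of the repaired region. If the occurrence straddled $\con(y)$, the covering token of its final byte would not be a token of $z$, and the argument would fail. That case is why the hypothesis takes exactly the form $\con(y) \ge |r| + (q + |W| - c)$.
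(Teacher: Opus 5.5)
Your proposal is correct and follows the paper's proof essentially step for step. The anchor condition $c \le q$ preserves the occurrence whole at its image, and evidence-reaching places it inside the committed prefix $z$; Lemma~\ref{lem:committed} makes $z$ completely tokenizable with the scan's own segmentation, and the window certificate then puts the covering token's origin at the image plus $o$, which is $|r| + (p - c)$. Your explicit index bookkeeping, including the check that $c \le p < |x|$, is a faithful expansion of what the paper compresses into one sentence.
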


\begin{proof}
Every repair anchored at or before $q$ preserves the occurrence whole at $q$'s image. When the scan of $y$
commits through the occurrence's image, the committed prefix $z$ contains it whole and is completely
tokenizable with the scan's own segmentation by Lemma~\ref{lem:committed}, and the window certificate places
the origin of the token covering the occurrence's final byte at its image plus $o$ in every completely
tokenizable input carrying the occurrence, the certified-window contract and its soundness
theorem~\citep[Definition~1 and Theorem~1, arXiv v1]{splitwindows2026}.
\end{proof}

\begin{corollary}[Complete repairs]
\label{cor:complete}
In every tokenizable repair $y = r \cdot \suffix{x}{c}$, a token of $y$'s segmentation begins at $|r| + (p -
c)$, for $p$ and $c$ as in either theorem.
\end{corollary}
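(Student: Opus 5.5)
The plan is to derive the corollary directly from the two resynchronization theorems, using the remark after Definition~\ref{def:reaching} that every tokenizable repair reaches every evidence. Fix $p$ and $c$ as in either theorem, and let $\window{x}{q}{q+w}$ be the evidence that theorem names. For the certified byte this is $q = p$, $w = 1$. For the certified window it is the occurrence, $w = |W|$, with $c \le q$. In both cases the side conditions $c \le q$, $0 < w$, $q + w \le |x|$ of Definition~\ref{def:reaching} hold. For the byte, $p < |x|$ gives $p + 1 \le |x|$. For the window, the occurrence lies inside $x$.

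Now take any tokenizable repair $r$ before $c$, so $y = r \cdot \suffix{x}{c}$ satisfies $\con(y) = |y| = |r| + (|x| - c)$. Since $q + w \le |x|$, this gives $\con(y) \ge |r| + (q + w - c)$. So $r$ is evidence-reaching for the chosen evidence. The relevant theorem then gives a token of the committed segmentation of $y$ beginning at $|r| + (p - c)$.

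It remains to identify the committed segmentation with $y$'s segmentation. Because $\con(y) = |y|$, the committed prefix of $y$ is $y$ itself, so the two coincide trivially. Lemma~\ref{lem:committed} confirms this, with $z = y$.

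The only point needing care is the side condition of Definition~\ref{def:reaching}, namely that the evidence lies within $x$ at or after the anchor. For the window case this also needs the window theorem's hypothesis $c \le q$, which is stronger than $c \le p$. I read ``for $p$ and $c$ as in either theorem'' as carrying each theorem's own anchor hypothesis. Everything else is immediate.
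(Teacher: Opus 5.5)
Your proof is correct and follows the paper's own argument: a tokenizable repair has $\con(y) = |y|$, so it is evidence-reaching for the theorem's evidence, and its committed segmentation is its whole segmentation. Your explicit check of the side conditions of Definition~\ref{def:reaching} ($c \le q$, $q + w \le |x|$) spells out what the paper's one-line proof leaves implicit.
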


\begin{proof}
A tokenizable repair commits its whole input, $\con(y) = |y|$, so it is evidence-reaching for any evidence,
and its committed segmentation is its segmentation.
\end{proof}

The corollary is the weaker claim. For an anchor $c$ and evidence $\window{x}{q}{q+w}$, write
\begin{gather*}
\Rcomp(x, c) = \{\, r \in \Sigma^* : \con(r \cdot \suffix{x}{c}) = |r| + |x| - c \,\}, \\
\Rreach(x, c; q, w) = \{\, r \in \Sigma^* : \con(r \cdot \suffix{x}{c}) \ge |r| + q + w - c \,\},
\end{gather*}
the parameters fixed per instance and dropped hereafter, with $w$ the evidence width throughout: $\Rreach$ for the
evidence-reaching repairs at an anchor and $\Rcomp$ for the completely tokenizable ones: $\Rcomp \subseteq \Rreach$
always, the corollary binds only $\Rcomp$, the theorems bind all of $\Rreach$. The inclusion can be strict, and the
witness is small. Over $\{\code{ab}, \code{;}\}$ on the damaged text \code{ab;\#}, anchored at zero on the certified
semicolon's evidence $[2,3)$: no repair completes the tail, the byte \code{\#} being tokenizable in no context, so
$\Rcomp$ is empty and the corollary holds vacuously; yet the identity repair commits \code{ab} and \code{;} through the
evidence, so $\Rreach$ is inhabited and the theorem binds it with force, a committed interior boundary at the image of
position two in every reaching repair.

The anchor constraint is $c \le q$, not $c \le p$: a repair may not touch the window's own bytes, which lie
partly before the resynchronization point whenever the origin is positive; at origin zero the two constraints
coincide. The searching algorithm satisfies the constraint by construction,
since it only reports occurrences lying wholly in the unmodified suffix.

\Needspace*{11\baselineskip}
\begin{lemma}[Progress]
A driver that alternates scanning with recovery, stopping at the input's end or when the recovery search
refuses, terminates after at most $|x|$ advancing iterations.
\end{lemma}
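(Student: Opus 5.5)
The plan is a potential-function argument on the driver's current committed offset, which I call the cursor $s$, with $0 \le s \le |x|$. I fix what one \emph{advancing iteration} is: a scan from the cursor, followed by a recovery call if that scan fails. The iteration either stops the driver or moves the cursor to the position the search returns. The claim then reduces to showing that every iteration which does not stop the driver strictly increases $s$. Since $s$ is an integer, starts at $0$, and never exceeds $|x|$, at most $|x|$ such increases can occur.

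First I treat the scan. By the preliminaries, emitted tokens have positive length, so a scan from $s$ ends at the absolute failure offset $f_s = s + \con(\suffix{x}{s}) \ge s$, with $f_s \le |x|$. If $f_s = |x|$, the driver has reached the end of the input and stops; this includes the case $s = |x|$, so no iteration is charged there. Otherwise the scan failed at $f_s < |x|$.

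Next I treat recovery. The shipped driver searches from one past the failure offset, that is from anchor $f_s + 1$. The search either refuses, and the driver stops, or it returns a position $p$. Two facts about $p$ are needed, and both follow from how the search reports. First, $p \ge f_s + 1$. The returned position is either a certified byte position at or after the anchor, or $q + o \ge q \ge f_s + 1$ for a window occurrence lying wholly at or after the anchor. This uses only $o \ge 0$ and the constraint $c \le q$ from Section~\ref{sec:repair}. Second, $p < |x|$, which holds because a resynchronization point satisfies $p < |x|$ by definition. The driver sets $s \leftarrow p$, so the cursor rises by at least one and stays at most $|x|$.

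Combining the two parts, the sequence of cursor values over non-stopping iterations is strictly increasing in $\{0, \dots, |x|\}$. It therefore has at most $|x|$ increments, one per advancing iteration, which gives the bound. The main obstacle is bookkeeping rather than mathematics. I have to make sure the counting charges each iteration exactly one strict increase and never charges the final stopping scan. I also have to confirm that the failure-anchored driver never hands the primitive an anchor at or below the old cursor. This is the one place where the ``one past the failure offset'' convention is essential: anchoring at $f_s$ itself could return $p = f_s$ when $f_s = s$, and the driver would loop forever. The oracle-anchored interface is excluded here, because the lemma concerns the shipped driver only.
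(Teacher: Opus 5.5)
Your proof is correct and follows the paper's own argument: positive token length makes each scan halt at a failure offset $f_s \ge s$, the search anchored at $f_s + 1$ returns a position strictly past it, and so every non-stopping iteration strictly raises an integer offset bounded by $|x|$. One simplification: you do not need the resynchronization-point definition for $p < |x|$. The loop condition already makes $p \le |x|$ enough, and the walk's own range gives the strict bound directly.
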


\begin{proof}
Every emitted token has positive length, and the recovery search begins one past the failure offset, so every
answer lies strictly past the position the scan held; each iteration that does not stop therefore strictly
increases the driver's offset, which is bounded by $|x|$. Refusal and the input's end stop the driver by the
loop's own condition.
\end{proof}

One-past is not pedantry: a certified byte can itself sit at the failure offset. Over $\{\code{ab}, \code{;}\}$
the byte \code{a} is certified, occurring only token-initially; on the input \code{";a;ab"} the scanner commits
the first semicolon, reads \code{a}, dies on the following semicolon with no accepting position, and halts with
failure offset $1$, exactly the certified occurrence; the failed transition itself happens one byte later. A
search that began at the failure offset would answer that same offset forever; beginning one past it, the search
answers the second semicolon and the remainder tokenizes.

\begin{remark}[Mode scope]
Under mode-driven scanning, an answer is a certificate relative to the active mode's automaton only: the
certificates quantify over one automaton's completely tokenizable inputs, and a repair, or the resumed text
itself between the failure and the answer, may reach the resume
position in a different mode, whose automaton certifies different positions. The library therefore consults the
active mode and documents the answer as mode-relative; nothing here upgrades the flat theorems to a guarantee
over modal repairs.
\end{remark}

One more consequence closes the loop between invariance and the vacuity question. Call $x$
\emph{repairable at $c$} when some tokenizable repair anchored at $c$ exists.

\begin{corollary}[Recovery succeeds where repair exists]
If $x$ is repairable at $c$ and $p$ is a repair-invariant resynchronization point for $x$ anchored at $c$
relative to some evidence $\window{x}{q}{q+w}$ with $c \le q$,
then $\suffix{x}{p}$ is completely tokenizable.
\end{corollary}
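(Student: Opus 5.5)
Take a witness $r \in \Rcomp(x,c)$, which exists because $x$ is repairable at $c$, and form $y = r \cdot \suffix{x}{c}$. By definition $y$ is completely tokenizable. Since $\Rcomp \subseteq \Rreach$, the repair $r$ is evidence-reaching for every evidence window with $c \le q$; this is the step Corollary~\ref{cor:complete} already records. The repair-invariance hypothesis on $p$ then applies to $r$. It places a token of $y$'s segmentation, which here is the whole segmentation, at offset $t = |r| + (p - c)$.

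Next we show that the suffix of a completely tokenizable input, taken from a token origin, is completely tokenizable. Let the tokens of $y$ be $\tau_1 \cdots \tau_k$, and suppose $\tau_{j+1}$ begins at $t$. Maximal munch is memoryless across token boundaries: each token is determined by the scan from its start offset in the initial state, using only the bytes from that offset onward. So the scan of $\suffix{y}{t}$, started at offset $0$ in the initial state, runs exactly as the scan of $y$ runs from $t$. By induction on the remaining tokens, it emits $\tau_{j+1}, \dots, \tau_k$ and ends committed at $|y| - t$. Hence $\con(\suffix{y}{t}) = |\suffix{y}{t}|$.

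Finally we identify the suffix. Since $p \ge c$, we have $\suffix{y}{t} = \suffix{x}{p}$: the repair replaces only bytes before $c$, and the offset $t$ is the image of $p$ under the shift by $|r| - c$. Therefore $\suffix{x}{p}$ is completely tokenizable.

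The main obstacle is the middle step, the restart argument. I would state it carefully as a small lemma, a dual of Lemma~\ref{lem:committed} cut at the front instead of the back. The point to check is that the scanner's decision at a token start uses no lookbehind. In the flat single-automaton model of the Preliminaries this holds, because each attempt restarts in the initial state and reads only forward. I would also note that the argument needs no positive-length subtlety, since the tokens are just copied. Under modes the claim would hold only relative to the active mode, as the Mode scope remark already warns.
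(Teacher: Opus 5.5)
Your proof is correct and follows the paper's route: take a complete repair, observe that it is evidence-reaching, use invariance to place a committed boundary at $|r| + (p-c)$, and identify the suffix from there with $\suffix{x}{p}$. The only difference is that you spell out the restart-in-the-initial-state step, which the paper compresses into ``the tokens of $y$ from that boundary on are a complete tokenization'' and states explicitly in its proof of Proposition~\ref{prop:collapse}.
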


\begin{proof}
Take any tokenizable repair $y = r \cdot \suffix{x}{c}$; it commits its whole input, so it is
evidence-reaching, a token of $y$ begins at $|r| + (p - c)$, and the tokens of $y$ from that boundary on are a
complete tokenization of $y$'s suffix there, which is $\suffix{x}{p}$ itself.
\end{proof}

So where a complete repair exists, a repair-invariant answer does not merely start a token: the resumed
scan tokenizes everything that remains. Nonvacuity alone does not buy this: an evidence-reaching repair can
commit through the evidence and still fail later, so the complete-tokenizability conclusion needs the
complete-repair premise, exactly as stated.

The converse direction closes the loop the evaluation needs, because it gives a necessary condition for
the strengthened quantifier to outrun the corollary's at a given anchor: the answer's own resumed suffix
must fail to tokenize completely.

\begin{proposition}[Reaching collapses to completing at a surviving answer]
\label{prop:collapse}
Let $p$ be a repair-invariant resynchronization point for $y$ anchored at $c$, relative to evidence
$\window{y}{q}{q+w}$ with $c \le q$, and suppose $\suffix{y}{p}$ is completely tokenizable.
Then every evidence-reaching repair anchored at $c$ is a tokenizable repair, and since every tokenizable
repair is evidence-reaching, the two quantifiers coincide at $c$.
\end{proposition}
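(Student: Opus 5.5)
Take an arbitrary evidence-reaching repair $r$ anchored at $c$, with repaired input $u = r \cdot \suffix{y}{c}$. The aim is to show that $u$ is completely tokenizable. The converse inclusion is already recorded after Definition~\ref{def:reaching}: a tokenizable repair has $\con(u) = |u|$, so it reaches any evidence. So only one direction needs work.

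The first step applies the hypothesis that $p$ is a repair-invariant resynchronization point. Since $r$ is evidence-reaching, some token of the committed segmentation of $u$ begins at $b = |r| + (p - c)$. By Lemma~\ref{lem:committed}, the committed prefix $\prefix{u}{\con(u)}$ is completely tokenizable with exactly that segmentation, and $b \le \con(u)$. So the scan of $u$ reaches offset $b$ as a committed token start, in the initial state.

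The second step uses the fact that maximal munch is a deterministic restart loop. Once the scan of $u$ stands at the committed offset $b$ in the initial state, everything it does from there depends only on the remaining text $u[b..]$. That text is exactly $\suffix{y}{p}$, because $b$ is the image of $p \ge c$ and nothing at or after the anchor is changed. Hence the scan of $u$ from $b$ on coincides, offset for offset, with the scan of $\suffix{y}{p}$ started at zero. The latter is completely tokenizable by hypothesis, so it commits all $|y| - p$ remaining bytes. Combining the two phases gives $\con(u) = b + (|y| - p) = |r| + |y| - c = |u|$, so $r \in \Rcomp$.

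The main obstacle is the restart fact in the second step: that the scan of a concatenation, once it commits at an offset, proceeds exactly as a fresh scan of the suffix. I would argue it directly from the longest-match loop stated in the Preliminaries. Each token attempt begins in the initial state and reads only bytes at or after its start. The last accept and the emitted token are therefore functions of the suffix alone, and an induction over the tokens emitted after $b$ finishes the argument.

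One subtlety must be checked: the committed start at $b$ must be a genuine restart point of the scan of $u$, not merely a boundary of some other segmentation. This is exactly what "a token of the scan's committed segmentation begins at" provides. I would also note that $p < |y|$ plays no role, and that reaching the evidence is used only to trigger the invariance hypothesis.
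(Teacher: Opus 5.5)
Your proof is correct and follows the paper's argument. Invariance places a committed restart at $|r| + (p - c)$, the scan from there reads exactly $\suffix{y}{p}$ and so commits the whole input, and the converse inclusion is the observation after Definition~\ref{def:reaching}; your induction for the restart fact and your appeal to Lemma~\ref{lem:committed} only spell out what the paper's one-line proof takes as immediate.
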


\begin{proof}
Both inclusions. Take an evidence-reaching repair $r$: by the invariance premise, a token of the committed
segmentation of $r \cdot \suffix{y}{c}$ begins at $|r| + (p - c)$, and past a committed boundary the scan
restarts in the initial state and reads exactly the image of $\suffix{y}{p}$, which is completely
tokenizable by hypothesis, so the repaired scan commits everything and $r$ is tokenizable. Conversely a
tokenizable repair commits its whole input, so it commits through the evidence's image and is
evidence-reaching, the observation already made after Definition~\ref{def:reaching}.
\end{proof}

The proposition is a fence against over-reading the strengthening: at an anchor whose answer's resumed scan
completes, the new semantics binds exactly the repairs the old corollary already bound, and any vacuity of
the complete-repair form at that anchor is vacuity of the evidence-reaching form too. Its premises are the
walk's own postcondition: a certified answer is repair-invariant relative to its reported evidence, which
lies at or after the search anchor by construction, so the proposition applies to every certified answer
whose resumed suffix tokenizes completely in one attempt. The strengthening's strictly additional
content can live only where resumed scans fail downstream, and Section~\ref{sec:eval} reports how many of
the campaign's answers remain candidates for it, one-attempt complete resumption deciding the rest.

\begin{corollary}[Inhabited complete repairs force the collapse]
\label{cor:dichotomy}
Let $p$ be repair-invariant for $x$ anchored at $c$ relative to evidence $\window{x}{q}{q+w}$ with
$c \le q$. If $\Rcomp(x, c) \ne \varnothing$, then $\suffix{x}{p}$ is completely tokenizable and
$\Rcomp = \Rreach$. Equivalently,
\[
\Rcomp \subsetneq \Rreach
\quad\text{if and only if}\quad
\Rcomp = \varnothing \text{ and } \Rreach \ne \varnothing.
\]
\end{corollary}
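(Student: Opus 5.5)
The first claim follows by chaining the two results just proved. If $\Rcomp(x,c) \ne \varnothing$, then $x$ is repairable at $c$. The corollary that recovery succeeds where repair exists then gives that $\suffix{x}{p}$ is completely tokenizable. It applies because $p$ is repair-invariant relative to evidence $\window{x}{q}{q+w}$ with $c \le q$, which is exactly that corollary's hypothesis. With $\suffix{x}{p}$ completely tokenizable, all premises of Proposition~\ref{prop:collapse} hold, read with $y := x$: repair-invariance at $c$, the same evidence with $c \le q$, and a completely tokenizable resumed suffix. The proposition therefore yields $\Rreach(x,c;q,w) \subseteq \Rcomp(x,c)$. Combined with the standing inclusion $\Rcomp \subseteq \Rreach$ noted after Definition~\ref{def:reaching}, this gives $\Rcomp = \Rreach$.

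For the displayed equivalence, I argue the two directions separately, using only set facts and the first claim.

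($\Leftarrow$) If $\Rcomp = \varnothing$ and $\Rreach \ne \varnothing$, then $\Rcomp \subseteq \Rreach$ always holds, and the two sets differ because one is empty and the other is not. So the inclusion is strict.

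($\Rightarrow$) Suppose $\Rcomp \subsetneq \Rreach$. Then $\Rreach$ contains an element outside $\Rcomp$, so $\Rreach \ne \varnothing$. If $\Rcomp$ were nonempty, the first claim would force $\Rcomp = \Rreach$, contradicting strictness. Hence $\Rcomp = \varnothing$.

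The only step needing care is the parameter bookkeeping in invoking Proposition~\ref{prop:collapse}, since its statement names the input $y$ rather than $x$. I would check that the anchor, the evidence window and the invariance premise transfer verbatim. I would also check that the premise $p < |x|$ from the resynchronization-point definition is not needed for suffix tokenizability: an empty suffix would be trivially tokenizable anyway. Beyond that the argument is purely set-theoretic, and I do not expect a substantive obstacle.
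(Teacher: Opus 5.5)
Your proposal is correct and follows the paper's own proof: the paper inlines the argument you cite (a complete repair is evidence-reaching, so a committed boundary lies at $p$'s image and the suffix from there, $\suffix{x}{p}$, tokenizes completely) and then applies Proposition~\ref{prop:collapse}. Your set-theoretic treatment of the equivalence matches the paper's as well.
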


\begin{proof}
Take a complete repair. It is evidence-reaching, so the invariance premise on $p$ places a committed
boundary at $p$'s image; the repair tokenizes its whole input, so the suffix from that boundary, which is
$\suffix{x}{p}$, tokenizes completely, and Proposition~\ref{prop:collapse} turns every
evidence-reaching repair into a complete one. For the equivalence: if the inclusion is strict the two sets
differ, so the first part forces $\Rcomp = \varnothing$, and $\Rreach$ properly contains it and is
therefore inhabited; conversely an empty $\Rcomp$ inside an inhabited $\Rreach$ is strict by definition.
\end{proof}

So the strengthening can add content in exactly one regime, and Section~\ref{sec:eval} reports how many of
the campaign's answers remain candidates for it: whether a given one of them lies in that regime turns on
$\Rreach$'s inhabitation, which the campaign does not decide.

\subsection{Non-claims}

Four fences, each deliberate. A repair that modifies the suffix at or after the anchor falls outside the repair relation
proved here: the definition quantifies over prefix repairs only, and for the window theorem the anchor must additionally
sit at or before the occurrence; the theorems say nothing about such an edit either way, though one that happens to
preserve the evidence still faces the occurrence-universal certificate. The existence of an evidence-reaching repair is
not claimed; where no scan of any repair commits through the evidence, the guarantee holds vacuously. That boundary is
strictly tighter than one drawn at complete tokenizability, and the residual asymmetry between the certificate kinds
narrows with it: a byte answer's evidence is the answer itself, so whenever the damaged suffix admits a first commit at
all, that commit witnesses content for the anchor placed at the answer. A window with positive origin instead rests on
evidence beginning before the resume position, which no scan anchored at the answer reads, so its guarantee can bind
repairs the resumed scan never exhibits. The answer is the first certificate in evidence order, not the smallest
answerable position: a window met earlier can answer a byte or two past one met later. And on a suffix no repair can
save to the end of input, the theorems still bind every repair whose scan commits through the evidence, but the driver
itself promises termination and nothing else, each resume strictly advancing until the input ends or the search refuses.

\section{The search}
\label{sec:search}

The primitive is one forward walk consulting both certificate kinds at every position: a certified byte answers
at its own position, and a searched window answers at its occurrence plus the certified origin. The walk returns
the first certificate met in evidence order, which is position order of the supporting evidence, not of the
answers, hence the first-in-evidence-order fence above; at one position the byte certificate is consulted first,
then windows by increasing length, two through four. One design choice departs from the companions' planners
deliberately: the planners disable the window search whenever byte certificates exist, because a planner prices
whole-input cuts and byte certificates are strictly cheaper there; a recovery wants an early sound point of
either kind after a specific offset, so the walk keeps both kinds live throughout. A nullable token set
contributes no windows, since only the window proof excludes nullability, while its byte certificates, when any
exist, stand and answer alone. Window membership is memoized per distinct byte string exactly as in the planner,
and when no searched certificate exists at or after the offset there is no answer, the refusal the evaluation
counts rather than excuses. Refusal is relative both to the searched lengths and to the windows paper's
conservative model. The lengths: a shipped fixture carries a five-byte certificate exactly one past the cap, and
the walk refuses there by design. The model: it can refuse a semantically certified window, and over
$\{\code{a}, \code{ab}, \code{b}\}$ every occurrence of \code{ab} begins a token at origin zero while the
shipped decider refuses it, so a refusal asserts only that no certificate of the searched kinds is reported,
never that no sound position exists.

\begin{corollary}[Walk soundness]
If the search starting at offset $s$ returns $p$ on evidence occupying $\window{x}{q}{q+w}$ with $q \ge s$,
then $p$ is a repair-invariant resynchronization point for $x$ anchored at any $c \le q$, hence at any $c \le
s$, relative to that evidence: in every repair whose scan commits through the evidence's image, a committed
token begins at the answer's image.
\end{corollary}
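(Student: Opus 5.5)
The plan is a case split on the kind of certificate the walk returned, after which each case is discharged by one of the two resynchronization theorems. The search returns either a certified byte answering at its own position, with evidence $\window{x}{p}{p+1}$, so $q = p$ and $w = 1$, or a searched window $(W, o)$ of length two to four, occurring at $q$ with answer $p = q + o$ and evidence $\window{x}{q}{q+|W|}$. No other answer form exists, and the walk reports an occurrence only when it lies wholly at or after $s$, which is exactly the hypothesis $q \ge s$.

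First I will fix an anchor $c \le q$. In the byte case, $q = p$ gives $c \le p$, and the theorem on certified bytes makes $p$ repair-invariant relative to $\window{x}{p}{p+1}$; this is the reported evidence. In the window case, the walk only emits windows when the token set is flat with non-nullable patterns, since a nullable set contributes no windows. So the windows theorem's scope holds, and its anchor condition $c \le q$ is exactly what we assumed. It yields invariance at $q + o = p$ relative to $\window{x}{q}{q+|W|}$. In both cases the conclusion is quantified over evidence-reaching repairs for the same evidence the walk reports, so the clause about a committed token at $|r| + (p - c)$ is literally the theorems' conclusion. The "hence" clause follows at once, because $c \le s \le q$.

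The only step needing care is checking that the reported evidence interval matches the one each theorem is relative to, and that the side conditions of the definitions hold. Those side conditions are $c \le p < |x|$ and $q + w \le |x|$. The walk searches only positions inside $x$, with occurrences ending within $x$, and its origin satisfies $0 \le o < |W|$, so $p < q + |W| \le |x|$. I would state explicitly that the memoized window membership is the windows paper's certified-window contract restricted to the searched lengths. That way a window answer really is a certified window and not merely a decider's output. Given that, the corollary is an immediate instance of the two theorems.
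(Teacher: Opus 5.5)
Your proposal is correct and matches the paper's proof: a case split on the kind of evidence, with $c \le q = p$ discharging the byte theorem's anchor condition, $c \le q$ discharging the window theorem's, and $c \le s \le q$ giving the second clause. One wording fix: the shipped window membership test is only sound for the certified-window contract, not equal to it (the paper notes it refuses the semantically certified \code{ab} over $\{\code{a}, \code{ab}, \code{b}\}$), and soundness of its positive answers is all your window case needs.
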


\begin{proof}
By cases on the evidence. A certified byte answers at its own position, $q = p$ with $w = 1$, and $c \le q =
p$ is the byte theorem's anchor condition. A certified window answers at its occurrence plus the certified
origin with the occurrence at $q$ and $w = |W|$, and $c \le q$ is the window theorem's. The walk reports only
evidence lying wholly at or after $s$, so $c \le s$ always suffices.
\end{proof}

A worked answer, taken verbatim from the library's own pinned fixture. Over identifiers and whitespace runs no
byte usefully certifies (the at-sign, which no token contains, is formally certified only in the vacuous sense
and reported by nothing), so recovery rests on windows alone. On the input \code{"abc@@@def ghi"} the scan fails
inside the junk; searching forward, the first certificate met is the four-byte window
\code{"def\textvisiblespace"} at offset $6$ with certified origin $3$, so the answer is $9$, the whitespace byte
that begins a token in every completely tokenizable input containing the occurrence. The window's bytes at
offsets $6$ through $9$ sit wholly in the unmodified suffix, which is exactly the anchor constraint of the
window theorem: a repair of everything before offset $6$ is bound by the guarantee, and a repair that rewrote
\code{def} would not be. A recovery that returned the occurrence instead of occurrence plus origin would resume
at $6$, a position the window does not certify and one that can lie mid-identifier in a tokenizable repair; the
shipped fixture kills exactly that mutant.

\begin{figure}[htbp]
\begin{lstlisting}[style=walk]
search(x, s):                                  $\triangleright$ the forward walk, evidence order
  for p := s to |x| - 1:
    if certified_byte(x[p]):                   $\triangleright$ a byte answers at its own position
      return p
    for len := 2 to 4, while p + len <= |x|:   $\triangleright$ the shipped window lengths
      o := certified_window(x[p .. p + len))
      if o != none:
        return p + o                           $\triangleright$ occurrence plus certified origin
  return none                                  $\triangleright$ explicit refusal

driver(x):
  pos := 0
  while pos < |x|:
    scan from pos, emitting tokens
    if the scan committed everything: stop
    f := the failed scan's committed offset
    p := search(x, f + 1)                      $\triangleright$ the search starts one past f
    if p = none: stop                          $\triangleright$ refusal, never a guess
    pos := p                                   $\triangleright$ resume at the returned position
\end{lstlisting}
\caption{The search and the canonical driver loop. The walk returns the first certificate in
evidence
order, the byte at its own position before windows by increasing length at their occurrence
plus origin, and refusal is explicit; the search starts one past the failure offset, the
progress lemma's premise, and resumption is at the returned position.
\code{next\_certified\_start()} implements the walk and \code{recover()} performs one
recovery move; the loop and its stop policy belong to the caller, shown here as
pseudocode.}
\label{fig:walk}
\end{figure}
\FloatBarrier

\section{Implementation}
\label{sec:impl}

The munch library~\citep{munch} ships the procedure in two layers, and the split is load-bearing; the cited
release states the contract in its API documentation and in the repository's top-level readme. The primitive, \code{Lexer::next\_certified\_start(input,
from)}, is a position-only query on one automaton, documented with the complete-repair corollary's contract, the
evidence-returning \code{next\_certified\_evidence(input, from)} beside it carrying the interval and kind behind
each answer; it knows nothing about errors or drivers. The behavior, \code{Tokenizer::recover()}, belongs to the
driver: a failed \code{next()} does not advance the reading position, since guessing a skip would invent tokens,
and the driver then chooses among stopping, seeking by its own rule, or \code{recover()}, which asks the active
mode's lexer for the first certified start at or after the position one past the current one and moves there,
returning the skip count. When no certificate among the searched kinds and lengths lies ahead, the position does
not move and the refusal is explicit. Two sibling forms return the evidence itself:
\code{recover\_from\_failure()} answers with the certified start and its evidence interval, and
\code{recover\_from\_clean(clean\_from)} floors the search at a caller's known-clean offset, the returned
evidence covered by construction, the interface the trust discussion of Section~\ref{sec:eval} measures. The
fail-closed tests pin each half at its own layer. At the lexer, fixtures pin both certificate kinds answered in
evidence order, the nullable set answering through its certified byte with no window consulted, and the
unbounded run refusing outright. At the driver, a mutant that ignored the window origin would move to the
uncertified occurrence rather than to occurrence plus origin, and a named fixture kills it; the refusal path and
the exact skip counts are pinned; and a fixture holds recovery to the active mode's automaton alone. The
evaluation's harness adds the strongest check: on every trial, an answer whose supporting evidence lies wholly
in the preserved suffix is checked by an executable landing assertion against the repair the pristine corpus
itself supplies, and every such check in the campaign passed.

\section{Evaluation}
\label{sec:eval}

The campaign: \code{tools/probes/src/recovery\_quality.cpp} of munch release \code{v1.6.0}, invoked as
\code{munch\_recovery\_quality 512 500 out.csv twitter.json 3}: 512\,KiB of corpus per generated row, 500 trials
per cell, three independent seeds, every seed fixed, so the whole experiment is deterministic. This is the sixth
campaign revision; the third through fifth revisions' archives remain pinned records, superseded rather than
overwritten: the fourth's convergence metric measured its divergence region from the first resume instead of the
corruption end, the fifth corrected that for lost boundaries but not for spurious starts, and the sixth
corrects both defects, each quantified in its successor's record. The full CSV (twenty-eight columns: the
trial's failure offset, corruption end, first mapped boundary, repairability and minimal-repair length, the
decider's direct answer at the blind anchor, then per arm the first and terminal positions with landing flags,
the evidence interval and kind, the terminal outcome, attempts, the per-move covered and landed counts, and the
convergence triple), the probe's own summary, the five generated corpora, two checked-in analysis programs that
derive this section's CSV-borne figures from the archive (the pristine-oracle pass and the within-cell repeat
count are the harness summary's own attestations, archived beside it with the move sidecar), and a pinned
reproduction script with the artifact hashes are deposited as a standalone record at \doi{10.5281/zenodo.22178507}; every
campaign statistic and mapped-oracle count this section reports recomputes from that deposit, the pinned
munch source tree the script checks out and verifies against the bundled harness snapshot, and the
pinned simdjson corpus the script fetches and holds to its recorded digest. The real-document row reads
the simdjson benchmark corpus \code{twitter.json} verbatim ($631{,}515$ bytes, byte-identical to
\code{jsonexamples/twitter.json} at the simdjson repository's release v3.10.1~\citep{simdjsondata}, SHA-256
\code{30721e49}\ldots, the full digest in the data notes and verified by the reproduction script), held to the
same complete-tokenizability assertion, the same damage protocol, and the same oracle as the generated rows.

Quality, not throughput. A pristine corpus $x$, completely tokenizable by its row's grammar and asserted so, is damaged
at a position by one of three operations at $k \in \{1,4,16\}$: substituting $k$ bytes with pseudo-random bytes,
deleting $k$ existing bytes, or inserting $k$ pseudo-random bytes. Every operation leaves a suffix intact: the damaged
input $y$ satisfies $y[e..] = x[c..]$ for a corruption end $e$ and pristine anchor $c$ the operation designates
outright. For the damage start $d$, substitution sets $(e, c) = (d+k,\, d+k)$, deletion $(e, c) = (d,\, d+k)$, and
insertion $(e, c) = (d+k,\, d)$, so images shift by zero, $-k$, and $+k$ respectively; $e$ is the designated start of
the suffix the operation guarantees untouched, and a replacement byte that happens to equal the original never moves it
earlier. The oracle is the pristine boundary set mapped through that shift, exact for the theorem-predicted consequence
on the exhibited pristine repair, boundaries inside the damaged window having no image. Six rows are measured: the
conventional C-like grammar with strings and line comments, the same with block comments alone, and byte-level RFC 8259
JSON~\citep{rfc8259}, UTF-8 well-formedness assumed rather than checked, so a corruption-injected raw byte is absorbed
inside strings that valid interchange would exclude, joined by the bare C-like row, whose operator and punctuation bytes
all certify exactly, the split-friendly variant, which adds a certified newline byte to live windows, and the JSON
grammar over the real-world document above. Trials the grammar absorbs without a scan failure are counted and set aside,
$37{,}264$ of the campaign's $81{,}000$ damage draws, leaving $43{,}736$ broken scans. Damage positions are drawn by
unbiased rejection sampling from the full span, clear of both corpus edges by at least sixty-four bytes, independently
per seed and per row, no two rows sharing a schedule or a payload stream; $43$ draws repeated an earlier position within
their own cell, counted rather than excluded. One past a final delimiter is the end-of-input offset in the sixth revision, a
completed resume rather than a refusal, so the placements answer together everywhere and the paired regression asserts
they differ by exactly the delimiter on every trial. Eleven arms run under one completed-incident driver: after every
failure the arm proposes a resume, the scan continues from it, and the incident ends at the end of input, at a refusal,
or at a budget of one hundred attempts, the terminal outcome recorded per trial. All blind arms search from one past the
failure offset, the same progress contract the driver keeps; the two oracle arms floor their search at the corruption
end, modeling a caller told the operation-designated start of the guaranteed-untouched suffix, and are taken up with the trust discussion. The arms, returned offsets
byte-exact: certified recovery, the walk's answer with its evidence interval archived per trial; certified-clean, the
same walk under the oracle floor; exact, the library's anchored decider (\code{next\_anchored\_start}, documented as
exact for complete-repair invariance on non-nullable sets, that contract taken here as documented and not proved, its
proof in a separate unposted draft on certified repair and in neither posted companion)
run as a procedure, the anchor advancing past a tail the decider refuses until a certificate holds; exact-clean, the
decider under the oracle floor; skip-one, the search start itself, the lexical panic baseline; the raw delimiter
conventions in both placements, newline-past and semicolon-past one past the delimiter's offset (the archives record
them under the historic names newline and semicolon), newline-at and semicolon-at at the delimiter's own offset, with an
in-harness regression failing if the two are interchanged; and the fresh-restart token-filtered pair, token-newline and
token-semicolon, the classical two-phase reading made concrete at the lexical layer: skip until a fresh scan makes
progress, discard its emitted tokens through the first designated synchronizer, the delimiter's own punctuation token or
an all-whitespace token carrying the newline, and resume one past it. The filter is relative to the restarted scan, not
the pristine one: a restart inside an original string or comment can reclassify interior text, so context blindness is
reduced, never abolished. Positions throughout are byte offsets, named apart: $d$ the damage start (the archive's
\code{p} column), $e$ the corruption end (the operation-designated start of the guaranteed-untouched suffix, as defined
above), $f = \mathrm{con}(y)$ the failure offset, the answer the resumed slice's first byte; an answer lands when it
equals the shift-mapped image of a pristine boundary outside the damaged window. Detection lag is real and measured: the
signed lag $f - d$ has median $0$ and 99th percentile $17$ over the range $[-436, 78]$, negative when a token opened
before the damage dies at it, and $f$ lies before $e$ on $40{,}401$ of the $43{,}736$ broken-scan trials. Metrics per
arm and trial: the first answer's position and landing, the terminal position and its landing where it lies inside the
input, the terminal outcome (completed, refused, or capped), attempts per incident, mean signed overshoot from the first
mapped boundary at or past the corruption end, its absolute view the tables' overshoot column, and, for completed
incidents, convergence: the signed distance from the corruption end to where the resumed token-boundary stream and the
mapped pristine boundary stream agree forever after, token starts compared and never token identities, with the mapped
boundaries from the corruption end to that point counted lost and the non-landing emitted starts in the divergence
region counted spurious. Landing and overshoot score each arm's first answer per incident, skip-one's first one-byte
step included; repeated skipping surfaces in the attempts column. Every trial is additionally stratified by
repairability at the blind anchor, the routine-reported verdict of the shipped \code{minimal\_repair} with every
returned repair witness-verified by scanning, so an answer on a tail no repair can complete is labeled as such while
remaining in the pooled averages. Positive labels carry their witnesses; negative labels rest on one unproved input,
stated formally so every later claim can name what it consumes:

\begin{assumption}[Negative-label exactness, the label premise]
\label{ass:labels}
Over a non-nullable token set, a negative report is exact: for every input $y$ and every anchor
$0 \le c \le |y|$,
\[
\code{minimal\_repair}(\suffix{y}{c}) = \code{nullopt}
\quad\text{implies}\quad
\Rcomp(y, c) = \varnothing.
\]
\end{assumption}

The non-nullable scope is load-bearing, not decorative: the routine deliberately refuses nullable sets
whether or not a repair exists, so an unscoped reading of the assumption would be false. Every campaign
automaton is non-nullable, so the scope covers every labeled trial. The assumption restates the routine's
documented contract; its proof is in neither this paper nor the two posted companions, but in a separate unposted
draft on certified repair, so here it is an assumption, and every conclusion below that depends on it says so. Pooled rates below carry Wilson 95 percent
intervals as descriptive conditional-on-draw summaries, never as inferential bands.

\begin{proposition}[The pristine corpus is a repair]
\label{prop:oracle}
Let $y[e..] = x[c..]$ with $x$ completely tokenizable, and let $p \ge e$ be a repair-invariant
resynchronization point for $y$ anchored at $e$, relative to evidence $\window{y}{q}{q+w}$ with $e \le
q$. Then $c + (p - e)$ is a token boundary of $x$.
\end{proposition}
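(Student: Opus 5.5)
The plan is to exhibit the pristine corpus itself as a repair of $y$ anchored at $e$, and then read off the boundary from the definition of a repair-invariant resynchronization point. Concretely, take $r = \prefix{x}{c}$, the pristine bytes before the pristine anchor. Then
\[
r \cdot \suffix{y}{e} = \prefix{x}{c} \cdot \suffix{x}{c} = x,
\]
by the hypothesis $\suffix{y}{e} = \suffix{x}{c}$. So $x$ is literally a repaired input $r \cdot \suffix{y}{e}$ in the sense of the local-repair definition, with $|r| = c$.

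Next I will check that this repair is evidence-reaching relative to $\window{y}{q}{q+w}$. Since $x$ is completely tokenizable, $\con(x) = |x| = c + |y| - e$. The evidence constraint $q + w \le |y|$ then gives
\[
\con(x) \ge |r| + (q + w - e).
\]
This is exactly the observation after Definition~\ref{def:reaching} that every tokenizable repair is evidence-reaching, so $r \in \Rcomp(y, e) \subseteq \Rreach(y, e; q, w)$. The anchor condition $e \le q$ is part of the hypothesis, so the evidence lies wholly in the preserved suffix and its image in $x$ is well defined.

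Applying the invariance premise to this repair then yields a token of the committed segmentation of $x$ beginning at $|r| + (p - e) = c + (p - e)$. Because $x$ commits its whole input, its committed segmentation is its segmentation, so $c + (p - e)$ is a token boundary of $x$. Equivalently, one can invoke Corollary~\ref{cor:complete} directly, since $x$ is a tokenizable repair.

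The only step needing care is the bookkeeping of the image offset. I must confirm that the definition's image $|r| + (p - c)$ is instantiated with anchor $e$ in $y$ and prefix length $c$, not confused with the pristine anchor; that the range $e \le p < |y|$ makes the image lie in $[c, |x|)$; and that the three damage operations' $(e, c)$ pairs all satisfy the single hypothesis $\suffix{y}{e} = \suffix{x}{c}$, which the statement already abstracts. No property of the automaton beyond complete tokenizability of $x$ is consumed.
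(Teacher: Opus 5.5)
Your proof is correct and is the paper's own argument: you exhibit $x$ as the repair $r = \prefix{x}{c}$ of $\suffix{y}{e}$, note that a tokenizable repair is evidence-reaching, and read the boundary $|r| + (p - e) = c + (p - e)$ off the definition. Your explicit check $\con(x) = c + |y| - e \ge |r| + (q + w - e)$ just spells out the observation after Definition~\ref{def:reaching}.

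The one thing to drop is the aside that invokes Corollary~\ref{cor:complete} ``directly''. That corollary is stated only for $p$ and $c$ supplied by the byte or window theorem, whereas here $p$ is an arbitrary repair-invariant point. The definition itself, as in your main line, is the right tool.
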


\begin{proof}
$x = x[0..c) \cdot x[c..]$ exhibits $x$ itself as a tokenizable repair of $y$'s preserved suffix, with $r =
x[0..c)$; a tokenizable repair is evidence-reaching for the stated evidence, so the definition places a
token boundary at $|r| + (p - e) = c + (p - e)$.
\end{proof}

The proposition is what gives the harness teeth. A certified answer whose supporting evidence, the byte itself or the
whole window occurrence, lies at or after $e$ is repair-invariant for $y$ anchored at $e$, so its image must be a
boundary of the pristine corpus, and the harness asserts exactly that on every such trial, failing the run on any
violation: the harness asserted the landing on every one of the $38{,}141$ evidence-covered first answers in this
campaign, and on every covered answer of every later recovery move besides, $40{,}885$ covered moves of the $48{,}217$
the certified incidents made in total, all landing. Answers whose evidence begins before $e$ fall outside
Proposition~\ref{prop:oracle}'s applicability at anchor $e$, its $e \le q$ premise unmet: they are reported
measurements, never counterexamples. Figure~\ref{fig:interval} draws the two cases on one byte axis. The campaign
measured $5{,}595$ such first answers of which $1{,}589$ landed, reported and not asserted. Near the seam the damaged
input's true segmentation can genuinely diverge from the mapped pristine one; the same conservative oracle is applied
uniformly to every arm, though their near-seam exposure differs. Before any corruption, a pristine-corpus oracle pass
asserted every certified answer from $512$ rejection-sampled offsets per row to be a boundary, zero violations over the
six rows, the same discipline the split-points report uses; the sampler draws from every offset but the corpus's final
byte, unbiased, replacing the earlier lattice taken up in the limitations.

Table~\ref{tab:ladder} draws the guarantee boundary once, so no later figure has to carry its own caveat:
the results and assumed contracts the campaign consumes, with what each needs, what it gives, and what it
deliberately leaves open.

\begin{table}[tbp]
\centering
\small
\renewcommand{\arraystretch}{1.25}
\begin{tabular}{@{}>{\raggedright\arraybackslash}p{0.20\linewidth}p{0.27\linewidth}p{0.25\linewidth}p{0.20\linewidth}@{}}
\toprule
Result or assumption & Needs & Gives & Leaves open \\
\midrule
Resynchronization theorems & a certified byte or window occurrence at or after the anchor (windows: flat
non-nullable sets), its evidence preserved, the repair's scan committing through it & a committed
boundary at the image, in every such repair & whether any such repair exists \\
\addlinespace[2.5pt]
Corollary~\ref{cor:complete} & the same certificate; the repair under consideration tokenizing whole &
the same boundary & existence again \\
\addlinespace[2.5pt]
Proposition~\ref{prop:collapse} & $p$ repair-invariant relative to its evidence, $c \le q$,
resumed suffix completely tokenizable & $\Rcomp = \Rreach$ at that anchor & whether either set is
inhabited \\
\addlinespace[2.5pt]
Proposition~\ref{prop:oracle} & $p$ repair-invariant at anchor $e$ with $y[e..] = x[c..]$, $x$ completely
tokenizable, evidence at or after $e$ & the image is a boundary of $x$ & pristine transfer when
$q < e$ \\
\addlinespace[2.5pt]
Corollary~\ref{cor:dichotomy} & $p$ repair-invariant relative to its evidence, $c \le q$ &
$\Rcomp \subsetneq \Rreach$ exactly when $\Rcomp = \varnothing$ and $\Rreach \ne \varnothing$ &
which of the two holds at a given anchor \\
\addlinespace[2.5pt]
Lemma~\ref{lem:overhang} & the walk's floor at the blind anchor & covered exactly when the evidence travel
reaches the overhang less one & how far the evidence lies from the anchor \\
\addlinespace[2.5pt]
Assumption~\ref{ass:labels} & taken, not proved here; non-nullable sets & $\Rcomp = \varnothing$ on
negative-labeled tails & its own truth, assumed here, unproved in the posted companions \\
\addlinespace[2.5pt]
Decider's documented contract & taken, not proved here; non-nullable sets & exact complete-repair
decisions at the anchor & evidence-reaching exactness; its own proof, in a separate unposted draft \\
\bottomrule
\end{tabular}
\renewcommand{\arraystretch}{1.0}
\caption{The guarantee boundary, read top to bottom as a ladder: each row names one result or
assumption, what it needs before it applies, what it gives once it does, and what it deliberately
leaves open, so a reader can locate any campaign figure on the exact rung that backs it and see in the
last column where that rung's guarantee stops. The campaign's figures cite their rows: the $40{,}885$
covered move-level landings are Proposition~\ref{prop:oracle} instances, the $13{,}522$ collapses are
Proposition~\ref{prop:collapse} instances whose emptiness is conditional on
Assumption~\ref{ass:labels}, and the $1{,}589$ uncovered first-answer landings remain
resynchronization-theorem instances at their blind anchors whose pristine transfer no row supplies:
there, they are measurements rather than consequences, which is the boundary the ladder exists to
draw.}
\label{tab:ladder}
\end{table}

\begin{figure}[bp]
\centering
\begin{tikzpicture}[
    scale=0.70,
    x=1cm, y=1cm,
    axisline/.style={draw=black, line width=0.5pt, -latex},
    tick/.style={draw=black, line width=0.5pt},
    damage/.style={draw=black, line width=3pt},
    boxline/.style={draw=black, line width=0.5pt},
    lead/.style={draw=black, line width=0.4pt, densely dotted},
    brace/.style={draw=black, line width=0.5pt, decorate,
                  decoration={brace, amplitude=4pt, raise=1pt}},
    bracem/.style={draw=black, line width=0.5pt, decorate,
                   decoration={brace, mirror, amplitude=4pt, raise=1pt}},
    scan/.style={draw=black, line width=0.4pt, -latex},
  ]
  \begin{scope}[yshift=0cm]
    \fill[black!12] (7.0,-0.16) rectangle (9.0,0.62);
    \draw[axisline] (0,0) -- (13.2,0);
    \node[anchor=west, font=\footnotesize] at (13.3,0) {bytes};
    \draw[damage] (2.0,0) -- (5.0,0);
    \draw[boxline] (9.0,-0.16) -- (7.0,-0.16) -- (7.0,0.62) -- (9.0,0.62);
    \draw[boxline, densely dashed] (9.0,0.62) -- (9.0,-0.16);
    \foreach \x in {2.8, 3.3, 5.0, 7.0, 9.0} { \draw[tick] (\x,-0.18) -- (\x,0.18); }
    \fill[black] (8.0,0) circle (2.2pt);
    \node[anchor=south, font=\small, inner sep=2pt] at (8.0,0.08) {$p$};
    \draw[brace] (7.0,0.72) -- (9.0,0.72);
    \node[anchor=south, font=\small] at (8.0,0.89) {$w$};
    \draw[scan] (3.3,-0.26) -- (6.85,-0.26);
    \node[anchor=north, font=\small] at (2.8,-0.46) {$f$};
    \node[anchor=north, font=\small] at (5.0,-0.46) {$e$};
    \node[anchor=north, font=\small] at (7.0,-0.46) {$q$};
    \node[anchor=north, font=\small] at (9.0,-0.46) {$q+w$};
    \draw[lead] (3.3,-0.32) -- (3.3,-0.98);
    \node[anchor=north, font=\small] at (3.3,-0.98) {$f{+}1$ (anchor)};
    \draw[bracem] (2.0,-1.80) -- (5.0,-1.80);
    \node[anchor=north, font=\footnotesize] at (3.5,-2.00) {damage span};
    \node[anchor=west, font=\small] at (0,0.89) {covered: $q \geq e$};
  \end{scope}
  \begin{scope}[yshift=-4.1cm]
    \fill[black!12] (4.0,-0.16) rectangle (6.0,0.62);
    \draw[axisline] (0,0) -- (13.2,0);
    \node[anchor=west, font=\footnotesize] at (13.3,0) {bytes};
    \draw[damage] (2.0,0) -- (5.0,0);
    \draw[boxline] (6.0,-0.16) -- (4.0,-0.16) -- (4.0,0.62) -- (6.0,0.62);
    \draw[boxline, densely dashed] (6.0,0.62) -- (6.0,-0.16);
    \foreach \x in {2.8, 3.3, 4.0, 5.0, 6.0} { \draw[tick] (\x,-0.18) -- (\x,0.18); }
    \fill[black] (5.4,0) circle (2.2pt);
    \node[anchor=south, font=\small, inner sep=2pt] at (5.4,0.08) {$p$};
    \draw[brace] (4.0,0.72) -- (6.0,0.72);
    \node[anchor=south, font=\small] at (5.0,0.89) {$w$};
    \draw[scan] (3.3,-0.26) -- (3.85,-0.26);
    \node[anchor=north, font=\small] at (2.8,-0.46) {$f$};
    \node[anchor=north, font=\small] at (4.0,-0.46) {$q$};
    \node[anchor=north, font=\small] at (5.0,-0.46) {$e$};
    \node[anchor=north, font=\small] at (6.0,-0.46) {$q+w$};
    \draw[lead] (3.3,-0.32) -- (3.3,-0.98);
    \node[anchor=north, font=\small] at (3.3,-0.98) {$f{+}1$ (anchor)};
    \draw[bracem] (2.0,-1.80) -- (5.0,-1.80);
    \node[anchor=north, font=\footnotesize] at (3.5,-2.00) {damage span};
    \node[anchor=west, font=\small] at (0,0.89) {uncovered: $q < e$};
    \node[anchor=west, font=\footnotesize] at (6.5,0.40)
      {the evidence begins before the preserved suffix};
  \end{scope}
\end{tikzpicture}
\caption{The coverage split on one byte axis: failure offset $f$, blind anchor $f{+}1$, corruption end $e$, evidence
$[q, q{+}w)$ with the answer $p$ inside it. Above, covered evidence begins at or past $e$ and
Proposition~\ref{prop:oracle} applies at anchor $e$; below, uncovered evidence begins before the corruption end, the
anchor's $e \le q$ premise is unmet, and only the walk's guarantee at the blind anchor remains. The drawing is
schematic, not to scale, and its overlap with the damage span is one uncovered geometry, not the definition: uncovered
means only $q < e$.}
\label{fig:interval}
\end{figure}

\begin{table}[p]
\centering
\small
\setlength{\tabcolsep}{5pt}
\renewcommand{\arraystretch}{1.15}
\begin{tabular}{lrrrr}
\toprule
Row & Answers & Covered & Uncovered & Uncovered landed \\
\midrule
C-like, conventional & $7{,}356$ & $6{,}995$ & $361$ & $119$ \\
C-like, block comments & $4{,}754$ & $4{,}753$ & $1$ & $1$ \\
JSON, generated & $8{,}888$ & $8{,}051$ & $837$ & $554$ \\
C-like, split-friendly & $7{,}306$ & $6{,}817$ & $489$ & $291$ \\
C-like, bare & $7{,}877$ & $4{,}387$ & $3{,}490$ & $376$ \\
JSON, real document & $7{,}555$ & $7{,}138$ & $417$ & $248$ \\
\midrule
Total & $43{,}736$ & $38{,}141$ & $5{,}595$ & $1{,}589$ \\
\bottomrule
\end{tabular}
\renewcommand{\arraystretch}{1.0}
\caption{Evidence coverage per row, certified first answers only. Covered answers, evidence at or past the corruption
end, pass the asserted landing check, every one; uncovered answers sit outside the transfer premise, and their landings
are reported, never asserted.}
\label{tab:coverage}
\end{table}

\begin{table}[tbp]
\centering
\small
\setlength{\tabcolsep}{5pt}
\renewcommand{\arraystretch}{1.15}
\begin{tabular}{lrrrrrr}
\toprule
Row & Answers & Refusals & Landing & Overshoot & Attempts & Conv. \\
\midrule
C-like, conventional & $7{,}356$ & $0$ & $96.7\%$ & $25.3$ & $1.02$ & $28$ \\
C-like, block comments & $4{,}754$ & $0$ & $100.0\%$ & $164.9$ & $1.00$ & $169$ \\
JSON, generated & $8{,}888$ & $0$ & $96.8\%$ & $4.2$ & $1.03$ & $7$ \\
C-like, split-friendly & $7{,}306$ & $0$ & $97.3\%$ & $24.3$ & $1.02$ & $27$ \\
C-like, bare & $7{,}877$ & $0$ & $60.5\%$ & $4.3$ & $1.47$ & $3$ \\
JSON, real document & $7{,}555$ & $0$ & $97.8\%$ & $3.3$ & $1.02$ & $26$ \\
\bottomrule
\end{tabular}
\renewcommand{\arraystretch}{1.0}
\caption{Certified recovery pooled per row over the whole campaign: answers and refusals,
first-answer landing rate, mean absolute overshoot in bytes, mean recovery attempts per
incident, and mean signed convergence distance in bytes from the corruption end to where the
resumed boundary stream agrees with the mapped pristine one forever after. The bare row's
$60.5\%$ is its pooled landing rate. The block-comment
row's $164.9$ bytes accompanies the certificate sparsity the text discusses, and its convergence
decomposes into the same distance once rather than accumulating over attempts. Every figure recomputes from the archived
CSV through the checked-in analysis programs.}
\label{tab:pooled}
\end{table}

\FloatBarrier
\paragraph{Three patterns, each with the theorem's fingerprint.}
First, the pristine-repair transfer is assertable exactly where its evidence-preservation condition holds: the harness
checked every evidence-covered answer against its mapped pristine boundary, first answers and every later move alike,
and all $40{,}885$ move-level checks passed, the $38{,}141$ first-answer checks among them (Table~\ref{tab:coverage}
splits the coverage per row). Overall, certified recovery landed $90.8$\% of its $43{,}736$ first answers, Wilson
interval $[90.6, 91.1]$, refusing nothing in this campaign, and the pooled figure's entire distance from the
window-dense rows' $96.7$--$100$\% is one row's trusted-anchor price, taken up below. The classical conventions carry no
assertion clause anywhere: skip-one lands $10.0$\% pooled at $10.9$ recovery attempts per incident against the other
answering arms' $1.0$--$1.1$ (the semicolon variants average below one, their frequent first-move refusals contributing
zero attempts), and it alone exhausts the attempt budget, $776$ incidents. Second, where certificates are dense the
guarantee is also near: on the generated JSON row certified recovery's mean absolute overshoot is $4.2$ bytes against
newline-past's roughly thirty, and its resumed stream agrees with the mapped pristine one within a median of $6$ bytes
past the corruption. Third, where certificates are scarce the guarantee is honest about its price: on the block-comment
row the certified answers land $100$\% but $165$ bytes out on mean absolute overshoot, while newline-past lands $92.5$\%
at a fraction of the distance; the certificate promises soundness at its own position, never proximity, and on these
rows the price tracks certificate sparsity, an observed association rather than a measured law. One pooled number
belongs in plain sight rather than a footnote: the fresh-restart token-filtered newline convention lands $97.1$\%
pooled, above certified recovery's $90.8$\%, most of the pooled gap sitting in the bare row, whose price the coverage
split above stratifies; what it lacks is a repair-universal boundary guarantee behind any of those landings, a convergence more than twice as far (median $26$ bytes to certified recovery's $10$), and an answer at all where its delimiter
never occurs as a designated token, the refusal column the semicolon variants show at scale.

\paragraph{Short evidence is exposed: the byte path under damage.}
The bare C-like row, where every operator and punctuation byte certifies exactly, exercises the byte path at
scale. It also lands
worst, $60.5$\%
overall, Wilson interval $[59.4, 61.5]$, and the covered split decomposes that rate, within the one row and campaign
schedule, grouped by the returned certificate's kind, an observed stratification rather than a controlled
contrast: of the row's $2{,}408$
byte-evidence answers, $2{,}213$ rest on evidence not wholly in the preserved suffix, $91.9$ percent, and
none of them land, while of its $5{,}469$ window answers $1{,}277$ are uncovered, $23.3$ percent, and
$376$ land. All $4{,}387$ evidence-covered answers of the row pass the asserted landing check every time.
The within-row contrast is an observed association, not a randomized comparison, but it no longer leans on
the cross-row one, where evidence length is confounded with grammar and corpus; one reading consistent
with it is that a single byte is far easier for damage to produce than a two-to-four-byte occurrence with
the right context.

\paragraph{The mechanism behind coverage, from the archive.}
Coverage has a proved part and an observed part: the identity is proved algebraically, and the archive
recomputes the observed bins while checking every archived row against it.

\begin{lemma}[The overhang law]
\label{lem:overhang}
Call $h = e - f$ the damage's \emph{overhang} and $\ell = q - (f{+}1)$ the walk's \emph{evidence travel} for
that answer, the distance from the blind anchor to where the answer's evidence begins.
The answer is covered, $q \ge e$, exactly when $\ell \ge h - 1$; the search starts at the blind anchor,
so $\ell \ge 0$, and an overhang of at most one forces coverage.
\end{lemma}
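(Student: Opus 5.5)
This is integer arithmetic on byte offsets, so the plan is to unfold the two definitions and rearrange. By definition $\ell = q - (f+1)$, so $q = \ell + f + 1$. The coverage condition $q \ge e$ then reads $\ell + f + 1 \ge e$, that is, $\ell \ge e - f - 1 = h - 1$. Every step is an equivalence, because adding or subtracting the same integer preserves $\ge$ in both directions. That gives the biconditional directly, with no case split on whether the evidence is a byte or a window: $q$ is the start of the reported evidence in either case, and $p$ and $w$ never enter.

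Second, nonnegativity of the travel. The blind arm calls the walk of Figure~\ref{fig:walk} with start $s = f+1$, the progress lemma's one-past contract. The walk only examines positions $p \ge s$ and reports evidence beginning at the position it is examining. For a byte the evidence starts at $q = p$; for a window it starts at the occurrence $q = p$, not at $p + o$. So $q \ge f+1$, which gives $\ell \ge 0$. This is the same fact Walk soundness uses, namely that reported evidence lies wholly at or after $s$.

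Third, the forcing clause. If $h \le 1$ then $h - 1 \le 0 \le \ell$, and the biconditional yields $q \ge e$. This also covers $h \le 0$, where the corruption end sits at or before the failure offset. No constraint on the sign of $h$ is needed, because the argument uses only integer inequalities.

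The only step needing care is the second one. We must make sure that $q$ denotes the evidence start, the occurrence position, and not the answer $p = q + o$. We must also confirm that the blind arm never floors below $f+1$, which holds because only the oracle arms change the floor, and they move it to $e$. With those two points settled, the lemma follows by substitution, and I expect no genuine obstacle.
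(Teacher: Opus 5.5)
Your proof is correct and follows the paper's own argument: the biconditional is the same rearrangement of $q \ge e$ into $q - (f{+}1) \ge (e - f) - 1$, and $\ell \ge 0$ comes from the walk's floor at $f{+}1$, where the evidence starts at the examined occurrence rather than at the answer. Your derivation of the forcing clause from $h - 1 \le 0 \le \ell$ just spells out a step the paper leaves implicit; one small correction is that the oracle arms floor at the later of $e$ and $f{+}1$, not at $e$ alone, which does not affect the blind-arm claim.
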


\begin{proof}
$q \ge e$ rearranges to $q - (f{+}1) \ge (e - f) - 1$, and the walk's documented floor puts $q$ at or
past $f{+}1$.
\end{proof}

The threshold is sharp. For every $h \ge 2$ choose $n$ with $2n \ge h + 2$, and in $x = (\code{ab})^n$ replace
$\window{x}{2}{2+h}$ by the length-$h$ prefix of $(\code{ba})^\omega$. The corruption end is the one the operation
designates, $e = 2 + h$, exactly as everywhere else in this paper; this block is chosen because it also differs from $x$
at every replaced position, so the damage is actual across the whole designated span rather than merely designated, and
the instance stands under either reading of where the damage ends, the designated end and the byte-difference end, one past the last differing byte, alike.
Every \code{a} of a completely tokenizable string over $\{\code{ab}\}$ begins a token, so \code{a} is a certified byte.
The scan commits the leading \code{ab} and fails at $f = 2$; the blind search from three returns the certified \code{a}
at $q = p = 3$, and $q < e$ for every $h \ge 2$. So at every overhang at least two there is an uncovered certified
instance, and the lemma's threshold cannot be weakened.

\FloatBarrier
\begin{figure}[!htb]
\centering
\begin{tikzpicture}
\begin{axis}[
  ybar=0pt, bar width=4pt, width=\textwidth, height=5cm, area legend,
  ymin=0, ymax=105, ylabel={pooled landing rate (\%)},
  symbolic x coords={conventional, block, JSON, split-friendly, bare, real JSON},
  xtick=data, x tick label style={font=\small},
  legend style={at={(0.5,-0.18)}, anchor=north, legend columns=2, font=\small,
    /tikz/every even column/.append style={column sep=12pt}},
  ymajorgrids, tick label style={font=\small},
]
\addplot[fill=blue!65!black!80, draw=blue!65!black] table[col sep=comma, x=grammar, y=certified] {data/r6/r6-landing-figure.dat};
\addplot[fill=teal!70!black, draw=teal!50!black] table[col sep=comma, x=grammar, y=exact] {data/r6/r6-landing-figure.dat};
\addplot[fill=orange!80, draw=orange!60!black] table[col sep=comma, x=grammar, y=skip-one] {data/r6/r6-landing-figure.dat};
\addplot[fill=gray!45, draw=gray!70] table[col sep=comma, x=grammar, y=newline] {data/r6/r6-landing-figure.dat};
\addplot[fill=black!80, draw=black] table[col sep=comma, x=grammar, y=newline-at] {data/r6/r6-landing-figure.dat};
\addplot[fill=violet!60, draw=violet!80!black] table[col sep=comma, x=grammar, y=token-newline] {data/r6/r6-landing-figure.dat};
\legend{certified, exact, skip-one, newline-past, newline-at, token-newline}
\node[font=\tiny, anchor=south, xshift=6pt] at (axis cs:bare,1) {$0$};
\node[font=\tiny, anchor=south, xshift=2pt] at (axis cs:real JSON,4) {$0.01$};
\end{axis}
\end{tikzpicture}
\caption{Pooled first-answer landing rate per row and arm, recomputed from the archived campaign. The raw
placements still flip between near-perfect and near-zero across rows; token filtering removed that failure
mode in this campaign, landing $97.0\%$ on the real document where its raw counterpart fails. Certified recovery needs no
placement convention, its one low row the trusted-anchor precondition; blind exact wins that row and loses
the conventional one, the complementarity the text takes up. The semicolon arms are in
Table~\ref{tab:quality}.}
\label{fig:landing}
\end{figure}

\begin{figure}[!htb]
\centering
\begin{tikzpicture}
\begin{axis}[
  width=0.85\textwidth, height=4.0cm,
  ymin=0, ymax=105, ylabel={rate (\%)},
  xlabel={overhang $e-f$, binned},
  xtick={1,2,3,4,5,6,7,8},
  xticklabels={$\le 0$, $1$, $2$, $3$, $4$--$7$, $8$--$15$, $16$--$31$, $32+$},
  tick label style={font=\small}, label style={font=\small},
  legend style={font=\small, at={(0.98,0.05)}, anchor=south east},
  ymajorgrids,
]
\addplot[mark=*, mark size=1.6pt, draw=black] table[x=bin, y=covered] {data/r6/r6-overhang.dat};
\addplot[mark=square*, mark size=1.5pt, draw=black!50, mark options={fill=black!12}]
  table[x=bin, y=landed] {data/r6/r6-overhang.dat};
\legend{covered, landed}
\end{axis}
\end{tikzpicture}
\caption{Coverage and first-answer landing for certified recovery against the damage's overhang past the
failure offset, recomputed from the archived CSV. Bin populations, left to right: $3{,}335$, $8{,}980$,
$1{,}133$, $2{,}733$, $9{,}708$, $5{,}103$, $11{,}558$, and $1{,}186$ answers. At overhang at most one the
anchor itself shelters the evidence and coverage is forced; every uncovered answer lives to the right.}
\label{fig:overhang}
\end{figure}

\paragraph{The known-clean arms.} The first half of that interface is measured rather than argued: an
oracle-aided campaign arm runs the same walk given the operation-designated start of the guaranteed-untouched
suffix, starting at the corruption end or
one past the failure, whichever is later, and a second oracle arm runs the exact decider from the same floor.
Their soundness arguments differ and both are asserted per trial: the clean walk's evidence is covered by
construction and every answer lands under the pristine transfer, while the clean decider returns no evidence and
its landings rest on complete-repair invariance with the pristine corpus the complete repair; $43{,}736$ first
answers each, the same count as the blind walk's, every one landing, with no refusals. The clean and blind walks
answer the same trials but not at the same places: their first positions differ on $4{,}592$ of the $43{,}736$
paired answers, $10.5$ percent, so what the truth about the damage changes is both which theorems cover the
answers and where a tenth of them land, the deployment precondition's content now a measured column rather than
advice.

\FloatBarrier
\paragraph{The real document.}
The ecological row runs the same JSON grammar over twitter.json, real Twitter API output. Certified recovery
does not notice the difference: $97.8$\% of its $7{,}555$ first answers land, mean absolute overshoot $3.3$ bytes.
The newline baselines expose the placement sensitivity the campaign measures as
outcome-critical,
which is why both placements are named. Newline-past collapses to one
landing in $7{,}555$ first answers, $0.01$\%: the document is pretty-printed, almost every newline is followed by
indentation, and skipping past the newline lands inside the run. Retained,
newline-at lands $96.9$\% here, by hitting the whitespace run the newline itself begins. Neither number
transfers: on the generated rows the same one-byte change runs the other way, pooled per row,
$98.0$\% falling to $26.6$\% on the conventional row, $92.5$\% to $1.3$\% on block comments,
$97.7$\% to $0.0$\% on bare,
where the
generated layout makes the newline mid-run and the byte after it a token start. The token-filtered form ends the
placement question the honest way: token-newline lands $97.0$\% on this document and $93.7$--$98.0$\% on
every generated row, because it never resumes inside a token its own scan just emitted, so the
token-filtered baseline both repairs the classical convention and removes its one-byte trap. Two
placements' worth of landing swing was a byte-level artifact; what no delimiter reading escapes is
availability and the missing repair-universal boundary guarantee. A raw delimiter convention's
landing rate is a layout property that a one-byte placement change swings between near-zero and near-perfect
in either direction; the certificate's soundness is placement-free and, for a fixed
automaton with preserved evidence, layout-independent, while availability, coverage,
refusal, and proximity remain corpus and layout properties, measured above, which is the
separation this row actually demonstrates.

\paragraph{Refusal is explicit and scoped.}
Certified recovery refused nowhere in this campaign, $43{,}736$ first answers on $43{,}736$ broken scans; its
refusal mode is real, pinned by fixture, and unexercised on these corpora under the rejection sampler. On the
window-dense rows every answer is a window answer, those grammars carrying no useful byte certificates; the
bare and split-friendly rows supply the byte path above, $2{,}408$ and $51$ byte-evidence answers. The
delimiter arms show what refusal looks like without a theorem: the semicolon variants each refuse over
fifteen thousand of the campaign's $43{,}736$ trials at the first move, over sixteen thousand terminally
once incidents that answered and then starved are counted, raw and token-filtered alike and nearly all on the
two JSON rows; the token-filtered form returns nothing at all on the real document, an observed result of this corpus:
its single pristine semicolon lies inside a string token that every restarted scan here happened to emit
whole, while the raw form's few answers
follow that in-string byte and land $0.7$ percent. Skip-one never refuses and lands rarely, $10.0$\% pooled,
the opposite failure.

\paragraph{The vacuity fence, stratified and then fenced again.} Every trial carries the verdict of the shipped
\code{minimal\_repair}, the routine's reported answer to whether any repair completes the blind tail: $26{,}928$
routine-labeled repairable against $16{,}808$ routine-labeled beyond repair, and the walk answered every one of the
latter, $38.4$ percent of its first answers. Assumption~\ref{ass:labels}, the label premise, is the one unproved input
this stratum consumes, and every claim resting on it below names it. Under the label premise the complete-repair reading
of Corollary~\ref{cor:complete} leaves those answers claiming nothing at the blind anchor, the vacuity question that
reading forces, and for $13{,}522$ of them the strengthened semantics is proved to fare no better: its quantifier
coincides with the complete-repair one without that assumption, so under the same premise it is empty exactly where that
one is; the rest stay undetermined in content, as follows. Every certified answer's evidence lies at or after its blind
search anchor, so Proposition~\ref{prop:collapse} needs only one further premise per trial, a one-attempt completed
resumed scan, and $13{,}522$ of the $16{,}808$ supply it: the $12{,}023$ whose evidence the damage spared and $1{,}499$
more on evidence not wholly in the preserved suffix. At all $13{,}522$ blind anchors the two quantifiers therefore
coincide, and under the label premise both are empty. Coincidence is in fact established far more widely, and without
invoking Assumption~\ref{ass:labels}: on the $26{,}928$ trials the routine labels repairable, every returned repair was
witness-verified by scanning, so $\Rcomp$ is inhabited by exhibition and Corollary~\ref{cor:dichotomy} forces the two
sets equal there independently of Assumption~\ref{ass:labels}. Equality therefore holds at $40{,}450$ of the campaign's
$43{,}736$ answers, $92.5$ percent, and only the $3{,}286$ remain candidates for strict inclusion at all; what the label
premise buys is not the equality but the reading of the equal sets as empty. The remaining $3{,}286$, whose first
resumed scans failed downstream, are the only answers this campaign leaves undetermined in content at the blind anchor
under the label premise: their soundness stands, and whether $\Rreach$ is inhabited there does not. Coverage never
entered that argument; what coverage decides is the transfer: the $12{,}023$ covered answers all landed, guaranteed at
the corruption-end anchor where the pristine corpus is a complete repair, the same guarantee the known-clean arms
exercise by construction, while the $1{,}499$ uncovered collapses carry no such transfer and land $1{,}066$ times as a
measurement, not a theorem. For uncovered evidence the corruption-end anchor is not a weaker fallback but no anchor at
all, the evidence beginning before it, outside the definition's domain, though each such answer stays governed by the
walk's soundness theorem at its own blind anchor. The deployment reading is the honest one: every returned answer is
sound at its blind anchor, and coverage decides which answers additionally carry Proposition~\ref{prop:oracle}'s
pristine transfer, the $12{,}023$ against the $1{,}499$. The failure offset cannot draw that line; the returned evidence
interval can, which is exactly why the evidence-returning form of the interface returns it.

\paragraph{The anchored decider, a different-property comparator.} The second machine in the campaign is the
library's anchored decider, whose documented contract is exactness for complete-repair invariance on
non-nullable sets: with the tail's end known to be the end of input, it answers the positions every completely
tokenizable repair agrees on and refuses tails no repair completes. That is a different predicate from the
certificates' evidence-reaching soundness, and the two separate on a two-token example: over $\{\code{ab},
\code{ba}\}$ the decider answers position zero of the tail \code{ab}, yet relative to the one-byte evidence
$\window{y}{0}{1}$ the repair \code{b} commits \code{ba} and dies having reached that evidence's image, with no
committed boundary at the answer's image, so the decider's answer is not evidence-reaching-invariant for that
interval. With the whole tail as evidence, reaching and completing already coincide, which is the decider's own
reading. The comparison below is therefore between two guarantees, not two implementations of one. The campaign
binds them where their properties overlap, and the sixth revision tests both directions on the decider's direct call
at the blind anchor itself, its answer an archived column, never the advancing procedure: the direct call
answered every one of the $26{,}928$ routine-labeled-repairable trials at or before the walk, never once later,
which is not a coincidence of these rows:

\begin{proposition}[The decider never answers later than the walk]
\label{prop:order}
Fix a non-nullable token set and an anchor $c$. Suppose the walk, searching from $c$, answers $p$ on
evidence $\window{x}{q}{q+w}$ with $c \le q$, and the anchored decider, consulted at the same anchor
under its documented contract, which reports the earliest position at or after $c$ meeting its
certificate predicate, answers $d$. Then $d \le p$.
\end{proposition}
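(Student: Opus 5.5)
The plan is to show that the walk's answer $p$ itself meets the decider's certificate predicate at anchor $c$. Once that holds, the decider's documented contract (it reports the earliest position at or after $c$ meeting its predicate) gives $d \le p$ immediately, since $p \ge c$. The argument therefore reduces to one inclusion: every walk answer is a complete-repair-invariant position at its anchor.

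The first step invokes walk soundness. The walk searched from $c$ and reported evidence $\window{x}{q}{q+w}$ with $q \ge c$, so $p$ is a repair-invariant resynchronization point for $x$ anchored at $c$, relative to that evidence. For window evidence this uses the non-nullable flat scope, which is assumed.

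The second step applies Corollary~\ref{cor:complete}. In every tokenizable repair $y = r\cdot\suffix{x}{c}$, a token of $y$'s segmentation begins at $|r| + (p-c)$. This is exactly the statement that every completely tokenizable repair agrees on a boundary at $p$'s image. By the decider's contract as stated in the paper, that is the predicate it decides at the anchor: it answers the positions every completely tokenizable repair agrees on.

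The third step splits on whether $\Rcomp(x,c)$ is empty.
\begin{itemize}
\item If $\Rcomp(x,c) \ne \varnothing$, the predicate holds at $p$ non-vacuously, and minimality of the decider's answer gives $d \le p$.
\item If $\Rcomp(x,c) = \varnothing$, the contract says the decider refuses tails no repair completes. It then has no answer $d$, so the hypothesis that it answers $d$ forces the first case.
\end{itemize}
The campaign's observation that the direct call answered at or before the walk on every routine-labeled-repairable trial is the first case.

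The main obstacle is pinning the decider's predicate down precisely enough for the second step to land. The worry is that its certificate predicate might carry extra conditions beyond ``boundary at the image in every complete repair,'' for instance restrictions on the lengths or positions it considers. If so, I would argue that any such restriction is satisfied by the walk's answer, because $p \ge c$ lies in the tail. Otherwise the proposition has to be read as conditional on the documented contract, as the statement already says. A secondary point is that the walk answer need not be the earliest complete-repair-invariant position. This costs nothing, because the claim is only $d \le p$, never equality.
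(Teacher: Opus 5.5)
Your proposal is correct and follows the paper's proof. Walk soundness and Corollary~\ref{cor:complete} make $p$ a position on which every complete repair at $c$ agrees, and the contract's minimality, not mere exactness, then gives $d \le p$. Your case split on $\Rcomp(x,c)$ is harmless but unnecessary: wherever the decider answers at all, $p$ meets its predicate, vacuously or not, so minimality alone bounds $d$ by $p$.
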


\begin{proof}
Every completely tokenizable repair is evidence-reaching, so by the resynchronization theorems a committed
token begins at $p$'s image in every complete repair anchored at $c$: the position $p$ is one the complete
repairs agree on. The decider's contract reports the earliest position at or after $c$ meeting that predicate, so
where it answers at all its answer $d$ is at or before every such position; $p$ is such a position,
hence $d \le p$. Exactness alone would not suffice: an invariant position that is not the earliest
could exceed the walk, so it is the contract's minimality, matched by the shipped
\code{next\_anchored\_start}, that the bound rests on.
\end{proof}

The measured containment is that proposition instantiated, worth $10{,}738$ bytes across
those trials, and on every routine-labeled-unrepairable trial it refused, a consistency regression between the
two routines rather than an independent proof, since both walk the same scenario machinery and their exactness
is a documented contract this paper assumes, its proof in a separate unposted draft and in neither posted
companion; every repair the routine did return was witness-verified by
scanning it against its tail. Run as a procedure, the decider answers everything and lands $89.8$ percent pooled
against the walk's $90.8$, and the per-row split is the finding: it wins the bare row, $89.1$ against $60.5$
percent, and loses the conventional row, $54.3$ against $96.7$, where an advanced anchor still sits inside a
broken string and anchored reasoning binds repairs of a damaged prefix, while the walk's occurrence-universal
certificates land wherever their evidence survived. Under the oracle floor both land on every trial, $100$
percent asserted, each under its own result: the clean walk under the pristine transfer, its evidence covered by
construction, and the clean decider under complete-repair invariance with the pristine corpus the complete
repair, the documented contract Table~\ref{tab:ladder} lists among the assumed inputs. The two machines are
complementary because their properties are: anchored completeness paid at this campaign's trustworthy anchors
and evidence-order search at its poisoned ones, two row-level contrasts rather than an anchor-class law, and at
$13{,}522$ of the poisoned anchors the surviving answers carry no evidence-reaching content either under the
label premise, only the clean-anchor transfer where their evidence is covered.

\paragraph{Convergence: what each convention costs the stream.} First landings decide where an arm resumes; convergence
measures what the choice costs downstream, the signed distance from the corruption end to where the resumed boundary
stream agrees with the mapped pristine one forever after, over completed incidents, with the boundaries from the
corruption end to that point counted lost, the initial jump included. The metric redistributes the credit twice over.
Skip-one, worst at landing, is the stream's best friend: it converges at median $2$ bytes and loses only $0.90$
boundaries per error pooled, at the price of $10.9$ attempts per incident, $0.9$ spurious starts per error inside the
divergence region, and $776$ budget exhaustions. Certified recovery converges at median $10$ bytes pooled, $2$ on the
bare row and $6$ on generated JSON, medians where the quality table's convergence column reports means, and loses $6.85$
boundaries per error pooled, from $0.22$ on the bare row to $34.6$ on block comments. On the block-comment row, where
every completed certified incident converges at its first answer, the lost boundaries are the initial jump the overshoot
already stated, the same quantity counted in tokens rather than bytes, appearing once rather than accumulating; on the
other rows a later move can lose boundaries the first answer did not skip. The newline conventions lose slightly more,
$8.01$ past and $7.42$ retained, at median $24$--$26$ bytes, and the semicolon forms lose a statement's worth, $24.4$,
$23.4$, and $24.5$ for past, retained, and token-filtered, at median about $60$, regardless of placement: the flip that
swings first landings between $0.01$ and $96.9$ percent on the real document moves convergence by roughly a token, so
the classical convention's real cost is the discarded remainder of the line, not the placement folklore. The
token-filtered semicolon form shows the availability cliff in stream terms: on the real document it refuses every
incident outright, its single candidate delimiter living inside a string token it now declines to treat as a
synchronizer.

\paragraph{The price of evidence order, measured.} The walk answers with the first certificate in evidence
order, not the smallest answerable position, and the campaign priced that fence: $90$ of the campaign's
$43{,}736$ certified first answers were nonminimal, by $90$ bytes in total, one byte each. The fence is real and
its cost on these rows is negligible, which is worth knowing before anyone complicates the walk to remove it.

\paragraph{Independent bounded verification.} A from-scratch maximal-munch reference model, written in Python
against only the public API and kept in the munch test suite as the recovery cross-check, is cross-checked
against the shipped library's own scan on $7.6$ million strings with zero disagreements, and drives bounded
exhaustive sweeps
over $89$ small non-nullable literal-token sets, boundaries and committed lengths the compared quantities, each
obligation carrying its own denominator. Walk soundness: $46{,}035{,}590$ (answer, repair) pairs, the
evidence-reaching premise holding on $1{,}308{,}265$ of them, no violations. The collapse proposition:
$836{,}387$ walk answers, its premise holding on $148{,}920$, no violations. The anchored decider:
$4{,}712{,}940$ (answer, completing repair) pairs, no violations, and never later than the walk on the
$174{,}368$ tails where the bounded search exhibits a completing repair and both answered. The routine's
negative labels: $95{,}836$, none refuted by any repair up to length four, the bounded reading of
Assumption~\ref{ass:labels}. Pristine transfer: an independently mapped damage sweep of $2.36$ million trials,
all $1{,}068{,}953$ covered answers landing, Proposition~\ref{prop:oracle} replicated without the harness. Every
quantifier is bounded where the contracts are not, and literal tokens leave regex structure and priority ties
outside the tested universe, so these runs are evidence, never proof; the drivers' executed negative controls
make a silent check an exit-failing defect of the cross-check itself; it runs in the munch test suite, where the
library it checks is built.

\begin{table}[bp]
\centering
\footnotesize
\setlength{\tabcolsep}{2.8pt}
\renewcommand{\arraystretch}{0.88}
\makeatletter
\begin{tabular}{llrrrrrr}
\toprule
Row & Arm & Answers & Landing & Overshoot & Conv. & Completed & Refused \\
\midrule
C-like, strings and line comments & certified & $7{,}356$ & $96.7\%$ & $25.3$ & $28$ & $100.0\%$ & $0$ \\
 & exact & $7{,}356$ & $54.3\%$ & $25.2$ & $28$ & $100.0\%$ & $0$ \\
 & skip-one & $7{,}356$ & $12.2\%$ & $9.2$ & $3$ & $100.0\%$ & $0$ \\
 & newline-past & $7{,}356$ & $98.0\%$ & $25.6$ & $28$ & $100.0\%$ & $0$ \\
 & newline-at & $7{,}356$ & $26.6\%$ & $24.7$ & $28$ & $100.0\%$ & $0$ \\
 & semicolon-past & $7{,}356$ & $98.1\%$ & $82.7$ & $86$ & $100.0\%$ & $0$ \\
 & semicolon-at & $7{,}356$ & $98.0\%$ & $81.7$ & $85$ & $100.0\%$ & $0$ \\
 & token-newline & $7{,}356$ & $98.0\%$ & $25.6$ & $28$ & $100.0\%$ & $0$ \\
 & token-semicolon & $7{,}356$ & $98.1\%$ & $82.8$ & $86$ & $100.0\%$ & $0$ \\
\midrule
C-like, block comments & certified & $4{,}754$ & $100.0\%$ & $164.9$ & $169$ & $100.0\%$ & $0$ \\
 & exact & $4{,}754$ & $100.0\%$ & $164.9$ & $169$ & $100.0\%$ & $0$ \\
 & skip-one & $4{,}754$ & $8.5\%$ & $10.8$ & $4$ & $100.0\%$ & $0$ \\
 & newline-past & $4{,}754$ & $92.5\%$ & $22.5$ & $26$ & $100.0\%$ & $0$ \\
 & newline-at & $4{,}754$ & $1.3\%$ & $21.7$ & $26$ & $100.0\%$ & $0$ \\
 & semicolon-past & $4{,}752$ & $97.7\%$ & $125.7$ & $133$ & $100.0\%$ & $2$ \\
 & semicolon-at & $4{,}752$ & $97.6\%$ & $124.8$ & $132$ & $100.0\%$ & $2$ \\
 & token-newline & $4{,}754$ & $93.7\%$ & $23.1$ & $26$ & $100.0\%$ & $0$ \\
 & token-semicolon & $4{,}752$ & $97.7\%$ & $125.7$ & $133$ & $100.0\%$ & $2$ \\
\midrule
JSON, RFC 8259 lexical forms & certified & $8{,}888$ & $96.8\%$ & $4.2$ & $7$ & $100.0\%$ & $0$ \\
 & exact & $8{,}888$ & $99.2\%$ & $1.9$ & $4$ & $100.0\%$ & $0$ \\
 & skip-one & $8{,}888$ & $12.3\%$ & $9.9$ & $12$ & $100.0\%$ & $0$ \\
 & newline-past & $8{,}888$ & $97.6\%$ & $29.5$ & $32$ & $100.0\%$ & $0$ \\
 & newline-at & $8{,}888$ & $97.5\%$ & $28.5$ & $31$ & $100.0\%$ & $0$ \\
 & semicolon-past & $208$ & $6.2\%$ & $8.9$ & $0$ & $0.2\%$ & $8{,}874$ \\
 & semicolon-at & $208$ & $0.0\%$ & $9.9$ & $--$ & $0.0\%$ & $8{,}888$ \\
 & token-newline & $8{,}888$ & $97.6\%$ & $29.5$ & $32$ & $100.0\%$ & $0$ \\
 & token-semicolon & $0$ & $--$ & $--$ & $--$ & $0.0\%$ & $8{,}888$ \\
\midrule
C-like, split-friendly & certified & $7{,}306$ & $97.3\%$ & $24.3$ & $27$ & $100.0\%$ & $0$ \\
 & exact & $7{,}306$ & $99.6\%$ & $24.7$ & $27$ & $100.0\%$ & $0$ \\
 & skip-one & $7{,}306$ & $12.6\%$ & $9.2$ & $3$ & $100.0\%$ & $0$ \\
 & newline-past & $7{,}306$ & $97.4\%$ & $25.3$ & $28$ & $100.0\%$ & $0$ \\
 & newline-at & $7{,}306$ & $97.3\%$ & $24.3$ & $27$ & $100.0\%$ & $0$ \\
 & semicolon-past & $7{,}304$ & $97.7\%$ & $81.8$ & $86$ & $100.0\%$ & $2$ \\
 & semicolon-at & $7{,}304$ & $97.6\%$ & $80.9$ & $85$ & $100.0\%$ & $2$ \\
 & token-newline & $7{,}306$ & $97.4\%$ & $25.3$ & $28$ & $100.0\%$ & $0$ \\
 & token-semicolon & $7{,}304$ & $97.8\%$ & $82.1$ & $86$ & $100.0\%$ & $2$ \\
\midrule
C-like, bare & certified & $7{,}877$ & $60.5\%$ & $4.3$ & $3$ & $100.0\%$ & $0$ \\
 & exact & $7{,}877$ & $89.1\%$ & $1.2$ & $3$ & $100.0\%$ & $0$ \\
 & skip-one & $7{,}877$ & $8.0\%$ & $8.4$ & $2$ & $100.0\%$ & $0$ \\
 & newline-past & $7{,}877$ & $97.7\%$ & $22.1$ & $24$ & $100.0\%$ & $0$ \\
 & newline-at & $7{,}877$ & $0.0\%$ & $21.2$ & $24$ & $100.0\%$ & $0$ \\
 & semicolon-past & $7{,}877$ & $97.7\%$ & $74.9$ & $78$ & $100.0\%$ & $0$ \\
 & semicolon-at & $7{,}877$ & $97.6\%$ & $73.9$ & $77$ & $100.0\%$ & $0$ \\
 & token-newline & $7{,}877$ & $97.7\%$ & $22.1$ & $24$ & $100.0\%$ & $0$ \\
 & token-semicolon & $7{,}877$ & $97.7\%$ & $74.9$ & $78$ & $100.0\%$ & $0$ \\
\midrule
JSON, real-world document & certified & $7{,}555$ & $97.8\%$ & $3.3$ & $26$ & $100.0\%$ & $0$ \\
 & exact & $7{,}555$ & $98.3\%$ & $2.2$ & $25$ & $100.0\%$ & $0$ \\
 & skip-one & $7{,}555$ & $5.6\%$ & $48.3$ & $9$ & $89.7\%$ & $0$ \\
 & newline-past & $7{,}555$ & $0.0\%$ & $18.4$ & $49$ & $100.0\%$ & $0$ \\
 & newline-at & $7{,}555$ & $96.9\%$ & $17.5$ & $40$ & $100.0\%$ & $0$ \\
 & semicolon-past & $708$ & $0.7\%$ & $16830.4$ & $3$ & $0.2\%$ & $7{,}543$ \\
 & semicolon-at & $708$ & $0.0\%$ & $16829.9$ & $--$ & $0.0\%$ & $7{,}555$ \\
 & token-newline & $7{,}555$ & $97.0\%$ & $26.8$ & $49$ & $100.0\%$ & $0$ \\
 & token-semicolon & $0$ & $--$ & $--$ & $--$ & $0.0\%$ & $7{,}555$ \\

\bottomrule
\end{tabular}
\makeatother
\caption{\small Every non-oracle arm pooled per row, three seeds of five hundred trials per cell, the body
generated by the archived analysis program and taken up verbatim. Columns: first-answer count and landing
rate, mean absolute overshoot, mean signed convergence distance, completed share, and terminal refusals,
incidents that never answered or answered and then starved, defined with the metrics above. The two oracle
arms are omitted as constant, $100$ percent landing on every row, asserted per trial; the per-cell grid over
three operations, three damage sizes, and three seeds is in the archived CSV, where the past placements
appear under their historic names newline and semicolon. Elsewhere the rows shorten, in this order, to
conventional (strings and line comments), block, generated JSON, split-friendly, bare, and real JSON. One
rounding is load-bearing: the real document's newline-past cell prints $0.0$ from exactly one landing in
$7{,}555$; the bare newline-at cell is a true zero.}
\label{tab:quality}
\end{table}

\enlargethispage{2\baselineskip}
\paragraph{Limitations and external validity.} Collected in one place: the corpora are generated except one
real-world document, the damage model is synthetic throughout, and real edit traces were not studied; cell sizes
are broken-scan counts and vary with absorption, and pooled rates carry Wilson intervals while individual cells
remain small in the absorbed strata; the third revision's schedule generator emitted fifteen-bit values, its
positions on a multiplicatively spread lattice of $32{,}768$ offsets per cell, disclosed in that archive's
record, and the sixth replaces it with unbiased rejection sampling over the full span, three independent
seeds with every row's schedule and payload streams seeded apart, per-seed figures printed beside the pooled
ones and $43$ within-cell repeated draws counted; every arm now runs to a terminal outcome under one driver, so
the completed-incident view is a reported column rather than a gap, with a budget of one hundred attempts that
only skip-one ever exhausts; the anchored procedure's per-move cost, one scenario play per automaton state over
the remaining tail with jump-table construction that can reach quadratic in the tail and an advancing procedure
cubic in the worst case, was not priced against the walk's single forward scan, the comparison here being
quality alone; every campaign automaton is non-nullable, so the decider's nullable refusal is never exercised
and the comparison never leaves its domain; six rows over five grammars were measured, and none of the quality
figures generalize beyond them, since proximity, refusal, and exposure rates varied with certificate
density and evidence length, an association this campaign observed but did not isolate; the repairability labels
and the decider's refusals rest on the anchored routines' documented contracts, which this paper assumes and does
not prove, their proofs in a separate unposted draft and in neither posted companion, and the two routines share
their scenario machinery, so their agreement is consistency,
not independent confirmation. The byte path is exercised at scale by one grammar row, with $51$ further byte
answers on the split-friendly row. The search is the shipped form, whose window component searches lengths two
through four (the byte search is not length-capped), and the cap's sensitivity was not varied; the seam oracle
is conservative and applied uniformly, no claim is made that every sound position is discoverable by the
searched certificates, and throughput was deliberately not measured. The convergence metric compares whole
boundary streams, dropped tokens included, so it credits agreement rather than non-triviality; the lost and
spurious columns carry the divergence region's content. One grammar tried for the byte path refused to
participate at all: a log-line lexer, whose tokens are a run of non-newline bytes and the newline itself,
tokenizes every byte string, so no corruption can break its scan and recovery never fires.

\section{Related work}

Panic mode is the textbook's simplest strategy at both levels: at the lexical level, delete successive
characters until a well-formed token appears, and at the parsing level, discard symbols until one of a
designated set of synchronizing tokens is found, usually delimiters such as semicolon, selected by the compiler
designer~\citep[pp.~88, 164--165]{dragon1986}. The guarantee attached is termination and, where a
synchronizer is found, membership in the designated set; what is not attached is any repair-universal
boundary guarantee for the position landed on, and the text itself notes that lexical panic-mode recovery
may confuse the parser. Our progress lemma reproduces the termination half of that classical guarantee,
membership following from the search rule where a delimiter convention is in force, and the
certificates add what it lacked, a theorem about the position landed on that ranges over repairs.

A close ancestor at the lexical level is Boullier and Jourdan's two-level scheme, which repairs table-driven
scanners and LR parsers alike: correction models specify insertions, deletions, and replacements around the
detection point, validated by the analyser accepting the model's trailing symbols, and when local repair fails
the parser skips to a grammar-writer key terminal while the scanner deletes the offending
character~\citep{boullier1987repair}. The classical comparator ideas have close antecedents here: the scanner's
deletion at the offending character is the nearest one-character antecedent of the skip-one convention, which
advances from the committed failure offset instead; the key-terminal skip anteceded the token-aware
designated-token discipline without supplying a repair-universal boundary guarantee; and the bounded correction
models carry a per-correction success test, continued acceptance. The scheme is the classical position at its
most parameterized: it chooses repairs and fallback recoveries judged excellent, mean, or poor against what a
human reader would have corrected, and it states operational acceptance conditions but no repair-universal
boundary property of the resume position. Repair invariance adds the missing quantifier for a scoped form of
exactly this design: once an anchor and its evidence are fixed, every prefix-local correction that preserves the
evidence and whose scan commits through it agrees at the certified position.

The repair school selects one concrete fix and validates it forward. Burke and Fisher~\citep{burke1987repair}
defer parse actions so repairs can be tried left of the error token, generate single-token candidates at each
trial point, and judge every candidate by the distance it lets the parse advance before blocking; a candidate
must advance at least one token in their implementation to remain in consideration, the farthest-checking
candidates are kept, and the pruning criteria were arrived at largely through experimentation with erroneous
Pascal and Ada programs; when simple recovery fails, scope closers are tried at the same trial points under
their own acceptance rules. Success means the correction a knowledgeable human reader would choose, a notion the
paper itself calls impossible to define precisely. The validation criterion concerns the proposed repair's
parse-ahead, never the position scanning resumes at, and nothing quantifies over the repairs not chosen, which
is exactly the quantification repair invariance makes.

Nearest in spirit is Richter's noncorrecting recovery: no correction is attempted, and the remainder of
the text is analyzed against the subword language, which buys provable properties about the error
messages, no spurious reports and no skipped text, with no assumptions about the nature of the
errors~\citep{richter1985noncorrecting}. Two of its 1985 observations anticipate this paper's
measurements: that panic mode trusts a single fiducial symbol ``to tell the parser where it is'' even
when ``a string of more than one symbol were less ambiguous'', which is the byte-versus-window distinction
the evaluation prices, and that the properties worth having are the provable ones. The deliverable still
differs the same way: subword membership says the damaged remainder can extend to a sentence, which is
repairability at the language level; repair invariance says where every such extension places a boundary.

The minimum-distance school fixes the metric rather than the position. Aho and Peterson~\citep{aho1972minimum} parse any
input to completion with the fewest possible errors, through a covering grammar whose error productions mark the
positions and kinds of the corrections, in time cubic in the input; they offer the algorithm as the yardstick for
evaluating recovery methods, and the textbook echoes exactly that framing while pricing it out of practice. The
quantification is the dual of ours: minimum distance searches over all repairs to return a minimum-error repair and its
parse, while repair invariance asks what every repair reaching the certificate's evidence agrees on, trading the richer
deliverable for the stronger quantifier. The school's modern member returns the whole optimum: Diekmann and Tratt's
CPCT+ reports the complete set of minimum-cost repair sequences at an error location, repairs $98.4$\% of $200{,}000$
real syntactically invalid Java programs within a half-second timeout each, and uses that completeness to report fewer
than half the error locations panic mode does on the same corpus~\citep{diekmann2020dontpanic}. The deliverable is
richer still, every best repair rather than one; the quantification is unchanged, over repairs to select among, never
over repairs to agree.

Incremental lexing maintains consistency across edits rather than certifying a recovery point across
hypothetical repairs. Wagner and Graham's general incremental lexer restarts the batch machine from saved states
at startable token boundaries and reproduces a batch scan of the incorporated text; errors are handled
representationally, by unmatched-text tokens and by error patterns recognizing near misses, and their
history-based alternative leaves invalid modifications unincorporated and retries them
later~\citep{wagner-lexing}. Hugo and Hansson's divide-and-conquer generator stores lexical errors in its
intermediate results, and the thesis also describes a local-error variant that lexes the following text from the
starting state~\citep{hugo2015incremental}. These works target batch or sequential agreement for the text they
incorporate; none states that a chosen position is a committed boundary in every repair of an unrepaired prefix
whose scan commits through the stated evidence.

Coding theory offers a longstanding form of the resynchronization question. A synchronizing word for a prefix
code drives the decoder into a known state from every state on which it is defined, so a complete decoder
realigns at each intact occurrence in the stream an error leaves behind, and a partial one, under the cited
definition, exactly where the word is defined from the state the error left. Almost all complete, equivalently
maximal, finite binary prefix codes admit such words as the number of codewords tends to
infinity~\citep{freiling2003sync}, and Ryzhikov and Szyku{\l}a study
the complexity of finding shortest ones, proving strong inapproximability for decoder-defined codes beside
algorithms for literal ones~\citep{ryzhikov2018sync}. Synchronizing words are the closest prefix-code analogue
of repair-independent reanchoring, and the certified byte is their lexical cousin. The general
setting does not transfer directly: maximal-munch token sets need not be prefix codes, one token may prefix
another, and commitment waits on lookahead and rollback, a decoder model the prefix-code results do not address;
the certificates quantify instead over repairs of the broken prefix, evidence-reaching failed scans included,
are decided statically per automaton, and refuse where their bounded search certifies nothing. The coding-theory
question is the existence and length of a universal resynchronizer for a code; ours is the soundness of one
position in one damaged input, under a scanning discipline whose rollback is exactly what desynchronizes the
decoder picture.

Three axes separate the lines above: what a position is keyed to, what its guarantee ranges over, and
what happens where nothing is found. Table~\ref{tab:axes} places the classical convention, the
lexical-level ancestor, the prefix-code analogue, and the certificates on those axes, each row argued
against its sources in the paragraphs above.

In deployment, tree-sitter aims, in its documentation's words, to be fast enough to parse on every keystroke and
robust enough to provide useful results in the presence of syntax errors, representing unrecognized text as
error nodes and inserted recovery tokens as zero-width missing nodes~\citep{treesitter}. The property claimed is
usefulness, stated operationally; no soundness property of the re-anchoring position is asserted, which is a
practical target the theorem here is meant to sharpen.

Where these lines single out a resume position at all, it is chosen by convention, validated by a score, or
inherited from history; to our knowledge, none states a property of a fixed broken-input position quantified
over every prefix repair whose scan commits through its supporting evidence, complete repairs the special case.
Repair invariance is exactly that scoped statement, and the certificates satisfying it were already posted for a
different purpose. Certified recovery does not prove that damaged input has a correct continuation; it
proves something narrower and operationally useful: whenever the search returns preserved evidence, every
repair whose scan commits through that evidence agrees on the returned boundary.

\begin{table}[!ht]
\centering
\small
\renewcommand{\arraystretch}{1.4}
\begin{tabular}{@{}>{\raggedright\arraybackslash}p{0.17\linewidth}p{0.22\linewidth}p{0.29\linewidth}p{0.24\linewidth}@{}}
\toprule
Line & Synchronizes on & Guarantee quantifies over & Failure or refusal \\
\midrule
Parser-level panic mode & designated synchronizing tokens, chosen by the designer, delimiters typically & nothing beyond
termination, and membership when a synchronizer is found; no repair-universal boundary guarantee &
no refusal; discards to a synchronizer or the input's end \\
\addlinespace[2.5pt]
Boullier and Jourdan & ordered correction models at the detection point; else key terminals & the chosen correction
only, by continued acceptance & local repair fails; key-terminal skip, character deleted \\
\addlinespace[2.5pt]
Synchronizing words for prefix codes & an intact occurrence of a synchronizing word & every decoder state
the word is defined on & partial decoders: possibly undefined from the error's state \\
\addlinespace[2.5pt]
This paper's certificates & a certified byte or certified window occurrence & every prefix repair whose
scan commits through the returned evidence & explicit refusal when no searched certificate lies ahead \\
\bottomrule
\end{tabular}
\renewcommand{\arraystretch}{1.0}
\caption{Three axes across the lines that re-anchor a damaged stream: what each line's search looks for in the input,
whom the resulting guarantee binds, and what happens when the search finds nothing. Each cell states only what its
paragraph argues from the sources: panic mode attaches termination and membership but no repair-universal boundary
guarantee, the two-level scheme's acceptance tests are per-correction, and synchronizing words speak about decoder
states rather than about repairs of a broken prefix. Panic mode, synchronizing words, and the certificates all search
the damaged input forward; the two-level scheme instead edits at the point of detection. The last row is the only one
whose guarantee ranges over repairs, complete repairs the special case, and the only forward search of the damaged input
here that refuses rather than consuming to the end.}
\label{tab:axes}
\end{table}

\FloatBarrier
\section{Conclusion}

Parser-level panic mode chooses a delimiter by convention; the certificates choose a resume position by theorem. A
position is a sound recovery point when every repair whose scan commits through the certificate's evidence places a
token boundary there, complete repairs the corollary; certified bytes and window origins supply such positions under the
stated anchor conditions; and the guarantee costs the committed-prefix lemma, a definition of reaching, and short proofs
on top of the companions' posted theorems. The procedure ships with its contract documented, its behavior pinned by
fail-closed tests, and the measurements say what the theorem buys: every evidence-covered answer passed its executable
landing check on every recovery move, $40{,}885$ of $40{,}885$; $90.8$\% of its first answers landed; the library's
anchored complete-repair decider, measured beside it as a different-property comparator, won the bare row and lost the
conventional one, both oracle-floored arms holding their contractual $100$ percent; the two repair readings are proved
equal at $40{,}450$ of the $43{,}736$ answers, $92.5$ percent, $26{,}928$ of them by a completing repair the harness
exhibited and scanned and $13{,}522$ by a resumed suffix that tokenizes whole, Assumption~\ref{ass:labels} needed only
to read the second group's coincident sets as empty, so only $3{,}286$ remain candidates for strict inclusion; and on the
generated JSON row,
where certificates are dense, newline-past's mean absolute overshoot is about seven times certified recovery's. Where
they are sparse it is honest about the price, and where the searched certificates end it refuses rather than guesses.
Claims end where the certificates do.

\Needspace*{8\baselineskip}
\section*{Tools}

Large language models assisted with drafting, with checking citations against primary records and with reviewing the
implementation; the author verified each suggestion by derivation, against primary sources, or against the
implementation, its tests and the archived artifacts, and is responsible for all content.

\begingroup
\setlength{\bibsep}{2pt}
\bibliographystyle{plainnat}
\bibliography{refs}
\endgroup

\end{document}